\documentclass[sigconf]{acmart}
\AtBeginDocument{%
  }

\setcopyright{acmlicensed}
\copyrightyear{2026}
\acmYear{2026}
\acmDOI{XXXXXXX.XXXXXXX}
\acmConference[KDD '26]{KDD}{August 09--13,
  2026}{Jeju, Korea}
\acmISBN{978-1-4503-XXXX-X/2018/06}

\usepackage[utf8]{inputenc} 
\usepackage[T1]{fontenc}    
\usepackage{hyperref}       
\usepackage{url}            
\usepackage{booktabs}       
\usepackage{amsfonts}       
\usepackage{amsmath}
\usepackage{nicefrac}       
\usepackage{microtype}      
\usepackage{xcolor}         
\usepackage{bm}
\usepackage{graphicx}
\usepackage{fancyvrb}
\usepackage{fancyhdr}
\usepackage{enumitem}
\usepackage{algorithm}
\usepackage{tikz}
\usepackage{algorithmic}
\newtheorem{theorem}{Theorem}[section]
\newtheorem{lemma}[theorem]{Lemma}
\newtheorem{example}[theorem]{Example}
\newtheorem{proposition}[theorem]{Proposition}
\newtheorem{corollary}[theorem]{Corollary}

\usepackage{adjustbox}

\DeclareMathOperator{\A}{\mathbf{A}}

\DeclareMathOperator{\W}{\mathbf{W}}
\DeclareMathOperator{\y}{\mathbf{y}}
\DeclareMathOperator{\bL}{\mathbf{L}}
\DeclareMathOperator{\blam}{\bm{\lambda}}
\DeclareMathOperator{\yhat}{\hat{\mathbf{y}}}
\DeclareMathOperator{\hier}{\mathcal{H}}
\DeclareMathOperator{\lagr}{\mathcal{L}}
\DeclareMathOperator{\Q}{\mathbf{Q}}
\DeclareMathOperator{\zeros}{\mathbf{0}}
\DeclareMathOperator{\diag}{diag}
\def\lsqr{\text{LSQR}}
\def\inv{^{-1}}

\usepackage{enumitem}
\setlist[itemize]{leftmargin=12pt}
\setlist[enumerate]{leftmargin=14pt}

\graphicspath{ {./figures/} }

\begin{document}

\title{Billions-Scale Forecast Reconciliation}


\author{Tianyu Wang}
\email{twang147@jhu.edu}
\affiliation{
\institution{Johns Hopkins University}
\city{Baltimore}
\state{Maryland}
\country{USA}
}
\authornote{Work completed while interning at Amazon in the summer of 2025.}
\author{Matthew C. Johnson}
\email{jnmmatt@amazon.com}
\affiliation{
    \institution{Amazon}
    \city{Seattle}
    \state{Washington}
    \country{USA}
}
\authornote{Long Term Planning and Forecasting, Supply Chain Optimization Technologies}
\author{Steven Klee}
\authornotemark[2]
\email{sklee@amazon.com}
\affiliation{
    \institution{Amazon}
    \city{Seattle}
    \state{Washington}
    \country{USA}
}
\author{Matthew L. Malloy}
\authornotemark[2]
\email{mlmalloy@amazon.com}
\affiliation{
    \institution{Amazon}
    \city{Seattle}
    \state{Washington}
    \country{USA}
}

\renewcommand{\shortauthors}{Wang et al.}

\begin{abstract}
    The problem of combining multiple forecasts of related quantities that obey expected equality and additivity constraints, often referred to a hierarchical forecast reconciliation, is naturally stated as a simple optimization problem.  In this paper we explore optimization-based point forecast reconciliation at scales faced by large retailers.  We implement and benchmark several algorithms to solve the forecast reconciliation problem, showing efficacy when the dimension of the problem exceeds \emph{four billion} forecasted values.  To the best of our knowledge, this is the largest forecast reconciliation problem, and perhaps on-par with the largest constrained least-squares-problem ever solved.  We also make several theoretical contributions.  We show that for a restricted class of problems and when the loss function is weighted appropriately, least-squares forecast reconciliation is equivalent to \emph{share-based} forecast reconciliation.  This formalizes how the optimization based approach can be thought of as a generalization of share-based reconciliation, applicable to multiple, overlapping data hierarchies.
\end{abstract}

\begin{CCSXML}
<ccs2012>
   <concept>
       <concept_id>10002950.10003648.10003688.10003693</concept_id>
       <concept_desc>Mathematics of computing~Time series analysis</concept_desc>
       <concept_significance>500</concept_significance>
       </concept>
   <concept>
       <concept_id>10002950.10003714.10003716.10011138.10010043</concept_id>
       <concept_desc>Mathematics of computing~Convex optimization</concept_desc>
       <concept_significance>500</concept_significance>
       </concept>
   <concept>
       <concept_id>10002950.10003648.10003688.10003691</concept_id>
       <concept_desc>Mathematics of computing~Regression analysis</concept_desc>
       <concept_significance>500</concept_significance>
       </concept>
   <concept>
       <concept_id>10002950.10003705.10011686</concept_id>
       <concept_desc>Mathematics of computing~Mathematical software performance</concept_desc>
       <concept_significance>500</concept_significance>
       </concept>
   <concept>
       <concept_id>10003456.10003457.10003458.10003460</concept_id>
       <concept_desc>Social and professional topics~Industry statistics</concept_desc>
       <concept_significance>500</concept_significance>
       </concept>
 </ccs2012>
\end{CCSXML}

\ccsdesc[500]{Mathematics of computing~Time series analysis}
\ccsdesc[500]{Mathematics of computing~Convex optimization}
\ccsdesc[500]{Mathematics of computing~Regression analysis}
\ccsdesc[500]{Mathematics of computing~Mathematical software performance}
\ccsdesc[500]{Social and professional topics~Industry statistics}
\keywords{Time series forecasting, hierarchical forecasting, forecast reconciliation}

\received{8 February 2026}
\received[revised]{XYZ}
\received[accepted]{XYZ}

\maketitle

%

\section{Introduction}\label{sec:intro}
Corporations have a need to create unified plans across their businesses.  Central to these plan are \emph{coherent} time-series forecasts of key business metrics (e.g., sales).  \emph{Coherent} forecasts---forecasts that aggregate as expected by data taxonomies---are often not a given, for two reasons.  First, accurate forecasts require intimate business knowledge; the data, models, and constituent forecasts are often owned by disparate teams.  Second, responsibilities of centralized forecasting teams are continuously evolving to include forecasts of new dimensions (e.g., sales broken out by region or by a new product attribute), and multivariate econometric models struggle to scale.  To maintain separation of concerns and ensure scalability, it can be advantageous to enforce coherence as a post-processing step.  In large retail corporations, this problem can be extreme, involving a large number of teams, extensive regional, product, seller, and customer data hierarchies with overlapping definitions, and billions of forecasted quantities.

Forecast reconciliation has received considerable attention over the last decade, and it has been shown that there are strong advantages over classical top-down and bottom-up forecasting \cite{athanasopoulos}.  Forecast reconciliation can be stated as follows: given an initial collection of forecasts $\yhat$, find a new collection of forecasts $\mathbf{y}$ that is close to the original collection of forecasts which also satisfies an additive aggregation constraint $\mathbf{A} \mathbf{y}=\mathbf{0}$. These aggregations may be defined by the business and data definitions. When forecasting retail demand, a non-negativity constraint $\mathbf{y} \geq \mathbf{0}$ is often included. If ``closeness'' of forecasts is defined by weighted euclidean distance, forecast reconciliation can be expressed as a simple optimization problem:  
\begin{equation}
\begin{aligned} \label{eqn:basic}
\min_{\mathbf{y} } \quad & \frac{1}{2}\| \mathbf{y} - \mathbf{\widehat{y}} \|_{2,{\mathbf{W}}}^2\\
\textrm{s.t.} \quad & \mathbf{A} \mathbf{y} = \mathbf{0} \\
  &\mathbf{y} \geq \mathbf{0}    \\
\end{aligned}
\end{equation}
where $\mathbf{A}$ is a matrix with entries from $\{0,1,-1\}$ that encodes hierarchical information (e.g, the regional forecasts must sum to the country-level forecasts, daily forecasts for January must sum to a forecast for the whole month of January, etc.) and $\mathbf{W}$ is a diagonal matrix of weights. It is important to allow for weighting because different time series may exist on vastly different scales, especially at high levels of business aggregation, and the weighting helps normalize changes across scales. As an example, changing the forecast of total sales for a large, multinational company by $\$1000$ does not have the same business meaning as changing the forecast for an individual item that averages $\$5000$ in sales per month by the same absolute amount. 

Absent the non-negativity constraint, Eq.~\eqref{eqn:basic} is a least squares problem with a readily-derived closed form solution.  Solving Eq.~\eqref{eqn:basic} with the non-negativity constraint requires iterative approaches, as the problem is an instance of cone regression \cite{dimiccoli2016fundamentalsconeregression} (or more generally a quadratic program). \emph{For any reasonably-sized problem}, as the objective is strongly convex, general-purpose convex optimization solvers will converge to the unique global minimum.  

In this paper we make a number of contributions.  First, we implement customized methods for efficiently solving Eq.~\eqref{eqn:basic} when $\widehat{\mathbf{y}}$ has more than \emph{four billion entries}.  We reconcile real-world demand data over a joint temporal and business hierarchy from a large retailer. In the time domain we forecast at the daily, weekly, and monthly levels. In the product catalog we consider low-level forecasts of individual items which are aggregated to intermediate levels including product category, product group, sort type, and total sales (see Section \ref{sec:data_description} for definitions and more details of the different product groupings). Each entry of $\yhat$ represents a forecast of a temporal and business entry at some point in the future at one of the above granularities. In this hierarchy, we solve Eq.~\eqref{eqn:basic} when
$$\mathbf{y} \in \mathbb{R}^{4,185,173,500} \mbox{ and } \mathbf{A} \in \mathbb{R}^{47,213 \times 4,185,173,500}.$$  

Because the temporal and product hierarchies overlap nontrivially, the problem in Eq.~ \eqref{eqn:basic} cannot be separated into smaller sub-problems.  To solve this problem at billions-scale we customize and implement several algorithms: Alternating Projections, Dykstra's Algorithm and Alternating Direction Method of Multiplies.  Our implementations rely on readily available sparse linear algebra routines and we use high-memory cloud computing instances. \emph{To the best of our knowledge, this is the largest forecast reconciliation problem ever solved (see \cite{SPRANGERS20241689}), and on-par with the largest least-squares problems ever solved}.  We describe details of implementation and stopping criteria in Section \ref{sec:algo}.

This paper also makes a theoretical contribution with important practical implications.  When the weights are assigned as $1/\yhat_i$ and the aggregation matrix $\A$ has \emph{disjoint row supports} (for example, in a simple hierarchy with two levels), the least squares solution is guaranteed to be nonnegative (see Theorem \ref{thm:nonneg}).  For more general tree-based hierarchies, we show that top-down (share-based) disaggregation and bottom-up aggregation are limiting solutions of certain \emph{top-heavy} and \emph{bottom-heavy} weighting schemes (see Theorem \ref{thm:top_bottom_heavy_aggs}). This is important as it means the optimization-based approach can be viewed as a flexible extension of traditional top-down and bottom-up forecasting.  It also implies that including the non-negativity constraint in Eq.~\eqref{eqn:basic} is redundant in the sense that the solution to the unconstrained optimization is \emph{always} non-negative in these settings.  

To summarize, in this paper we paper make the following contributions:
\begin{enumerate}
\item We formulate the reconciliation problem in a way that allows simultaneous reconciliation of any number of overlapping sets of forecasts and data hierarchies, more in-line with modern optimization literature.
\item We demonstrate an efficient approach to generate the constraint matrix $\mathbf{A}$ from standard tabular datasets. See Section \ref{sec:constraints}.
\item We implement and benchmark several algorithms to solve extensions of Eq.~\eqref{eqn:basic} at billions-scale.  See Sections \ref{sec:algo} and \ref{sec:exp}.
\item We show that Eq.~\eqref{eqn:basic} recovers \emph{share-based} reconciliation for a restricted class of problems  provided the weights on the second level are set as $1/\widehat{y_i}$.  See Section \ref{sec:discussion}.  This means that the optimization approach can be thought of as an extension of share-based (top-down) reconciliation. 
\end{enumerate}

This paper is organized as follows. In Section \ref{sec:algo} we formulate the problem and introduce the algorithms we use to solve Eq.~\eqref{eqn:basic}.  We describe experiments and compare performance in Section \ref{sec:exp}. In Section \ref{sec:discussion} we discuss the results and relate the optimization framework to share-based top-down forecasting. We list related work in Section \ref{sec:related}.

\section{Formulation and Algorithms}\label{sec:algo}
We focus on the point forecast optimization reconciliation in Eq.~\eqref{eqn:basic}.  This is a constrained least-squares problem with linear (in)equality constraints, which is a specific form of a cone regression and a more general quadratic program. Under some conditions, it can be formulated as a non-negative least squares problem, but in general cannot (for example when $\mathbf{A}$ corresponds to multiple overlapping hierarchies). Absent the non-negativity constraints, Eq.~\eqref{eqn:basic} has a readily derived closed form solution:
\begin{equation}\label{eq:lsqr}
\mathbf{y}^*_{\text{LSQR}}=\mathbf{\widehat{y}}-\mathbf{W}^{-1}\mathbf{A}^\top (\mathbf{A}\mathbf{W}^{-1}\mathbf{A}^\top)^{-1}\mathbf{A}\mathbf{\widehat{y}}.
\end{equation}
When the non-negativity constraint $\mathbf{y} \geq \mathbf{0}$ is included, iterative methods are required.  Although there are many general purpose convex optimization solvers in Python that can be used to solve Eq.~\eqref{eqn:basic}, they do not scale to our use case. Solving this optimization problem for forecast reconciliation is challenging in practice because the size of $\mathbf{y}$ is extremely large and the entries in $\mathbf{y}$ vary by orders of magnitude. Consequently:
\begin{itemize}
\item Solvers using interior point algorithm such as \texttt{Clarabel}, \texttt{CVXOPT}, and \texttt{ECOS} are sensitive to poor scaling \cite{wright}. When we use them to reconcile our forecasts, they fail to converge. We demonstrate this with an example in Section \ref{sec:exp}.
\item Solvers such as \texttt{DAQP} and \texttt{qpOASES} are designed for dense matrices, and thus can be computationally expensive in our case when the matrices are large and sparse \cite{arnstrom,stark}.  
\item Solvers such as \texttt{Gurobi} and \texttt{MOSEK} require commercial licenses \cite{gurobi,mosek}.
\end{itemize}

After exploring different algorithms, we propose the following three algorithms.

\subsection{Alternating Projections}\label{sec:algo_alter}

In our problem, with the non-negativity constraints, the optimized solution lies in the intersection of two convex sets $\mathbf{A}\mathbf{y}=\mathbf{0}$ and $\mathbf{y}\geq \mathbf{0}$. Therefore, one natural solution is to use the alternating projection algorithm, which finds a point in this intersection by iteratively projecting onto each of the set. Even with large sparse $\mathbf{A}$, the projections can be efficient, and minimally affected by the scale.

For this algorithm, we apply a stopping criterion that considers two types of residuals:
\begin{itemize}
\item iterate change in $\mathbf{y}$: $r_{\text{iter}}=\|\mathbf{y}_{(t+1)}-\mathbf{y}_{(t)}\|\leq\epsilon_{\text{iter}}$
\item feasibility residuals: $r_{\text{fea}}=\|\mathbf{A}\mathbf{y}_{(t+1)}\|\leq\epsilon_{\text{fea}}$ or \\  $r_{\text{fea}}=\|(\mathbf{y}_{(t+1)})_{-}\|\leq\epsilon_{\text{fea}}$, depending on which projection is done first at each iteration
\end{itemize}

Algorithm \ref{algo_alter} describes the alternating projection approach in detail. The last step of each iteration is to project onto the null space of $\mathbf{A}$, so each $\mathbf{y}_{(t)}$ is guaranteed to satisfy $\mathbf{A}\mathbf{y}_{(t)}=\mathbf{0}$. Thus, the stopping criteria only checks the feasibility violation $r_{\text{fea}}=\|(\mathbf{y}_{(t+1)})_{-}\|$.

\begin{algorithm}
    \caption{Alternating Projection algorithm for Optimization Problem Eq.~\eqref{eqn:basic}}
    \label{algo_alter}
    \begin{algorithmic}
        \STATE Initialize $\mathbf{y}_{(0)}=\mathbf{\widehat{y}}$. Set $t=0$, $r_{\text{iter}}=r_{\text{fea}}=1e^{10}$. Choose small values $\epsilon_{\text{iter}},\epsilon_{\text{fea}}$.
        \STATE Pre-compute $\mathbf{W}^{-1}\mathbf{A}^\top$ and $(\mathbf{A}\mathbf{W}^{-1}\mathbf{A}^\top)^{-1}$.
        \WHILE{$r_{\text{iter}}>\epsilon_{\text{iter}}$ or $r_{\text{fea}}>\epsilon_{\text{fea}}$}
            \STATE $\mathbf{y}_{(t+1)}\leftarrow (\mathbf{y}_{(t)})_+$, i.e. project onto non-negative orthant;
            \STATE $\mathbf{y}_{(t+1)}\leftarrow \mathbf{y}_{(t+1)}-\mathbf{W}^{-1}\mathbf{A}^\top \big((\mathbf{A}\mathbf{W}^{-1}\mathbf{A}^\top)^{-1}(\mathbf{A}\mathbf{y}_{(t+1)})\big)$, i.e. project onto null space of $\mathbf{A}$;
            \STATE $r_{\text{iter}}\leftarrow\|\mathbf{y}_{(t+1)}-\mathbf{y}_{(t)}\|$; $r_{\text{fea}}\leftarrow\|(\mathbf{y}_{(t+1)})_{-}\|$;
            \STATE $t\leftarrow t+1$.
        \ENDWHILE
        \STATE Return $\mathbf{y}^*_{\text{AP}}:=\mathbf{y}_{(t)}$.
    \end{algorithmic}
\end{algorithm}

In our empirical experiment presented in Section \ref{sec:exp}, the algorithm generates a solution close to the true optimal within a short period of time; unfortunately there is no guarantee that this method converges to the optimal, as the non-negative orthant is not a linear subspace (see, for example, \cite{pang}).

\subsection{Dykstra's Algorithm}\label{sec:algo_dykstra}

Dykstra's algorithm, a variant of the alternating projection method, guarantees the convergence to the true projection \cite{dykstra}. The approach is detailed in Algorithm \ref{algo_dykstra}. 

\begin{algorithm}
    \caption{Dykstra's algorithm for Optimization Problem Eq.~\eqref{eqn:basic}}
    \label{algo_dykstra}
    \begin{algorithmic}
        \STATE Initialize $\mathbf{p}=\mathbf{q}=\mathbf{0}$, $\mathbf{y}_{(0)}=\mathbf{\widehat{y}}$. Set $t=0$, $r_{\text{iter}}=r_{\text{fea}}=1e^{10}$. Choose small values $\epsilon_{\text{iter}},\epsilon_{\text{fea}}$.
        \STATE Pre-compute $\mathbf{W}^{-1}\mathbf{A}^\top$ and $(\mathbf{A}\mathbf{W}^{-1}\mathbf{A}^\top)^{-1}$.
        \WHILE{$r_{\text{iter}}>\epsilon_{\text{iter}}$ or $r_{\text{fea}}>\epsilon_{\text{fea}}$}
            \STATE $\mathbf{u}\leftarrow \mathbf{y}_{(t)}+\mathbf{p}$;
            \STATE $\mathbf{y}_{(t+1)}\leftarrow (\mathbf{u})_+$, i.e. project onto non-negative orthant;
            \STATE $\mathbf{p}\leftarrow \mathbf{u}-\mathbf{y}_{(t+1)}$;
            \STATE $\mathbf{v}\leftarrow \mathbf{y}_{(t+1)}+\mathbf{q}$;
            \STATE $\mathbf{y}_{(t+1)}\leftarrow \mathbf{v}-\mathbf{W}^{-1}\mathbf{A}^\top \big((\mathbf{A}\mathbf{W}^{-1}\mathbf{A}^\top)^{-1}(\mathbf{A}\mathbf{v})\big)$, i.e. project onto null space of $\mathbf{A}$;
            \STATE $\mathbf{q}\leftarrow \mathbf{v}-\mathbf{y}_{(t+1)}$;
            \STATE $r_{\text{iter}}\leftarrow\|\mathbf{y}_{(t+1)}-\mathbf{y}_{(t)}\|$;
            $r_{\text{fea}}\leftarrow\|(\mathbf{y}_{(t+1)})_{-}\|$;
            \STATE $t\leftarrow t+1$.
        \ENDWHILE
        \STATE Return $\mathbf{y}^*_{\text{Dykstra}}:=\mathbf{y}_{(t)}$.
    \end{algorithmic}
\end{algorithm}

\subsection{Alternating Direction Method of Multipliers (ADMM)}\label{sec:algo_admm}

Although Dykstra's algorithm converges to the projection even when the interacting sets are not linear, it lacks theoretical guarantee for the objective optimality under commonly used stopping criteria \cite{birgin}. 

In general, augmented Lagrangian algorithms, such as the alternating direction method of multipliers (ADMM), excel in optimization problems like ours. They allow problem splitting and efficient projections, making it helpful for large-scale problems \cite{parikh}. In addition, the penalty term regularizes the system and reduces sensitivity to bad scaling \cite{boyd}. Although ADMM-based solvers exist in Python packages such as \texttt{OSQP} and \texttt{SCS}, they are not tailored for our problem. After our empirical experiments, we found it much faster to implement ADMM on our problem by ourselves---we can split the problem and do projection separately based on our specific constraints.

In detail, we re-formulate Optimization Problem Eq.~\eqref{eqn:basic} as:
\begin{equation}\label{eqn:admm}
\begin{aligned}
\min_{\mathbf{y},\mathbf{z}} \quad (\mathbf{y} - \mathbf{\widehat{y}})^\top&\mathbf{W}(\mathbf{y}-\mathbf{\widehat{y}})+f(\mathbf{z})\\
\textrm{s.t.} \quad & \mathbf{A} \mathbf{y} = \mathbf{0} \\
  &\mathbf{y} = \mathbf{z}    \\
\end{aligned}
\end{equation}
where $f(\mathbf{z})=0$ if $\mathbf{z}\geq \mathbf{0}$ and $f(\mathbf{z}) = \infty$ otherwise. Based on this formulation, we can use the scaled form of ADMM algorithm to get the optimal solution; detailed steps are presented in Algorithm \ref{algo_admm}. 

\begin{algorithm}
    \caption{ADMM algorithm for Optimization Problem Eq.~\eqref{eqn:admm}}
    \label{algo_admm}
    \begin{algorithmic}
        \STATE Initialize $\mathbf{z}_{(0)}=\mathbf{\widehat{y}}$, $\mathbf{u}_{(0)}=\mathbf{0}$, $t=0$, $r_{\text{primal}}=r_{\text{dual}}=1$, $\epsilon_{\text{primal}}=\epsilon_{\text{dual}}=0$.
        \STATE Choose a value for $\rho$, small positive $\epsilon_{\text{abs}}$ and $\epsilon_{\text{rel}}$.
        \STATE Pre-compute matrix operations that are fixed across iterations.
        \WHILE{$r_{\text{primal}}>\epsilon_{\text{primal}}$ or $r_{\text{dual}}>\epsilon_{\text{dual}}$}
            \STATE $\mathbf{y}_{(t+1)}\leftarrow \arg\min_\mathbf{y}\{(\mathbf{y} - \mathbf{\widehat{y}})^\top\mathbf{W}(\mathbf{y}-\mathbf{\widehat{y}})+\frac{\rho}{2}\|\mathbf{y}-\mathbf{z}_{(t)}+\mathbf{u}_{(t)}\|_2^2\ \text{s.t. }\mathbf{A}\mathbf{y}=\mathbf{0}\}$;
            \STATE $\mathbf{z}_{(t+1)}\leftarrow (\mathbf{y}_{(t+1)}+\mathbf{u}_{(t)})_{+}$;
            \STATE $\mathbf{u}_{(t+1)}\leftarrow \mathbf{u}_{(t)}+\mathbf{y}_{(t+1)}-\mathbf{z}_{(t+1)}$;
            \STATE $r_{\text{primal}}\leftarrow\|\mathbf{y}_{(t+1)}-\mathbf{z}_{(t+1)}\|_2$; $r_{\text{dual}}\leftarrow\|\rho(\mathbf{z}_{(t+1)}-\mathbf{z}_{(t)})\|_2$;
            \STATE $\epsilon_{\text{primal}}\leftarrow\sqrt{n}\epsilon_{\text{abs}}+\epsilon_{\text{rel}}\cdot\max(\|\mathbf{y}_{(t+1)}\|_2,\|\mathbf{z}_{(t+1)}\|_2)$; $\epsilon_{\text{dual}}\leftarrow\sqrt{n}\epsilon_{\text{abs}}+\epsilon_{\text{rel}}\cdot\|\rho \mathbf{u}_{(t+1)}\|_2$;
            \STATE $t\leftarrow t+1$.
        \ENDWHILE
        \STATE Return $\mathbf{y}^*_{\text{ADMM}}:=\mathbf{y}_{(t)}$.
    \end{algorithmic}
\end{algorithm}
Note that the ``update $\mathbf{y}_{(t+1)}$'' step (i.e. first line inside the while loop) is a simple quadratic programming with equality constraints, which can be solved via KKT, i.e.
\begin{equation*}
\begin{bmatrix}
    \mathbf{y}_{(t+1)}\\
    \lambda
\end{bmatrix}
=
\begin{bmatrix}
    \mathbf{H} & \mathbf{A}^\top\\
    \mathbf{A} & \mathbf{0}
\end{bmatrix}^{-1}   
\begin{bmatrix}
    -\mathbf{c}\\
    \mathbf{0}
\end{bmatrix}
\end{equation*}
with $\mathbf{H}=2\mathbf{W}+\rho\mathbf{I}$ and $\mathbf{c}=-2\mathbf{W}\widehat{\mathbf{y}}+\rho(\mathbf{u}_{(t)}-\mathbf{z}_{(t)})$. Some matrix operations, such as $\mathbf{H}$ and the first term in $\mathbf{c}$, are fixed across iterations, and thus can be pre-computed for efficiency.

In our case, the quadratic term in Optimization Problem \eqref{eqn:admm} is a closed, proper, and convex function when $\mathbf{W}$ is positive semi-definite, and $f(\cdot)$ is closed, proper, and convex. Moreover, our constraints are linear, and the Slater condition holds as the constraint set is nonempty, so the unaugmented Lagrangian for Problem \eqref{eqn:admm} has a saddle point. Boyd et al.~\cite{boyd} prove that under these conditions, the ADMM iterates satisfy residual convergence, objective convergence, and dual variable convergence.

Although in practice, ADMM can be slow to generate a high accuracy solution, it usually converges to modest accuracy within a few tens of iterations \cite{odonoghue,stellato}, which is sufficient in our case. In fact, from empirical experiments we found that the converging speed is highly dependent on $\rho$ and the scale of the problem. Although we might need trial and error to find a nice $\rho$ in practice, once we find it, we can use it for all similar reconciliation tasks.

For the stopping criterion, we propose to check the primal and dual residuals, as it is shown in \cite{boyd} that the objective suboptimality can be bounded by their combination. As they suggest, we define the stopping criteria as described in Algorithm \ref{algo_admm}.

\section{Experiments}\label{sec:exp}
\subsection{Data Description and Definitions} \label{sec:data_description}

We complete a set of experiments reconciling forecasts of retail demand (specifically, the number of units sold) for a large retailer.  We use five sets of forecast datasets produced by distinct teams.  The forecasts consist of 
\begin{enumerate}
    \item weekly forecasts of individual items in the product catalog at the SKU level (e.g., Bounty Quick Size Paper Towels, White, 8-Family-Rolls);
    \item daily forecasts at the \emph{product group} level (e.g., all paper towels);
    \item monthly forecasts at the \emph{product family} level (e.g., all consumable items, which includes paper towels);
    \item monthly forecasts at the \emph{sort type} level (e.g., items that can be sorted by a machine vs those which cannot); and
    \item daily sort-type forecasts.
\end{enumerate}
All forecasts are collected over an $18$-month horizon, generated on a forecast snapshot date in the recent past. The finest-grain forecast of weekly SKUs makes up the majority of the entries of the vector $\yhat$. This leads to the following aggregations: 

\begin{enumerate}
    \item In each month, daily sort-type forecasts sum to monthly sort-type forecasts. 
    \item Daily sort-type forecasts and daily product-group forecasts must agree at the aggregate level.
    \item Within each product family, monthly sort-type forecasts must agree with the monthly product-family forecast.
    \item Weekly SKU-level forecats must agree with day-to-week aggregations at the product group level.
\end{enumerate}

To study the feasibility of reconciling the demand forecasts using the methods described in Section \ref{sec:algo}, we rely on high memory cloud computing infrastructure: specifically, AWS EC2 spot instance of type \texttt{u7i-8tb.112xlarge} with 8TB of memory.  The implementations rely on Python \cite{python_lang_ref}, Pandas \cite{mckinney-proc-scipy-2010} and Scipy's \texttt{scipy.sparse} linear algebra routines \cite{2020SciPy-NMeth}. 

The constraint matrix $\mathbf{A}$ can be derived from standard tabular datasets using the \texttt{pandas} \texttt{groupby} method on a the concatenated tabular forecasts datasets.  For details on construction of $\mathbf{A}$, see Sec. \ref{sec:constraints}.  In our large scale experiments $\mathbf{A}$ has $47,213$ rows, representing $47,213$ constraint equations, and $4,185,173,500$ columns (the length of $\mathbf{y}$). We rely on sparse matrix implementations in Python's \texttt{scipy.sparse} package.  In our experiment, constructing matrix $\mathbf{A}$ took less than an hour.  

To construct the diagonal weight matrix $\mathbf{W}$, for each forecast $y_i$, we assign its weight $[\mathbf{W}]_{i,i}=w_i/\widehat{y_i}^2$ to incorporate both its level of importance ($w_i$) and its scale ($1/\widehat{y_i}^2$). We further explain this setting in Section \ref{sec:discussion}. For this specific experiment, we assign $w_i=1$ to data with weekly SKU-level granularity, $w_i=1,000$ to daily sort-type and daily product-group forecasts, and $w_i=50,000$ to monthly sort-type and monthly product-family forecasts. The choice of weights depend on our belief and knowledge in these forecast models, and can also be tuned by running reconciliation on historical forecasts and comparing to the actual values.

\subsection{Constraints from Tabular Datasets \label{sec:constraints}}

Given several datasets that contain forecasts of various segments, how can one construct the constraint matrix $\mathbf{A}$? In this section we describe how to construct $\mathbf{A}$ directly from tabular datasets (avoiding, for example, manual specification), and mention requisite conditions on the tabular datasets.  In short, the procedure amounts to intersecting column names and grouping on the shared dimensions. 

Consider several tabular datasets, each with two types of columns: \emph{i)} dimensions (such as region, product classification, year or day), \emph{ii)} and metrics (such as units sold).  
Assume that columns of the datasets \emph{i)} share a common naming convention across the tabular datasets, are \emph{ii)} \emph{explicit} (for example, that `state' is a column if `county' is a column, and a second dataset includes `state') and \emph{iii)} each column \emph{partitions} the space (each column contains all the labels for that dimensions, and no duplication, so that aggregating out the column gives the correct total).  Under these three conditions, $\mathbf{A}$ can derived with a simple algorithm that intersects the columns of the tabular datasets and then groups-by the intersecting columns. 

We describe the approach for two tabular datasets and note it can be extended to several datasets by considering pairs. The approach proceeds as follows.   First, the two datasets are concatenated into a new dataset.   The vector of forecasts $\mathbf{y}$ corresponds to metric column from this concatenated dataset. 

Next, $\mathbf{A}$ is derived as follows.  The concatenated dataset is \emph{grouped-by} the shared dimensions (the columns that exist in both datasets). For each group key (a unique set of labels for the shared dimensions), the metric values of the rows from the first dataset must sum to equal the sum of metrics from the rows of second dataset by the assumptions described in the previous paragraphs. Let $\mathcal{I}_1$ be the indices of the rows from the first dataset, and $\mathcal{I}_2$ be the indices of the rows from the second dataset.  For a group key $k$, we set $A_{k,i} = 1 $ for $i \in \mathcal{I}_1$ and $A_{k,i} = -1$ for $i \in \mathcal{I}_2$ and $A_{k,i} = 0$ if $i$ is not in either set.

\subsection{Comparison of Different Algorithms}\label{sec:exp_comparison}

Before running the large-scale experiment, we compare the performance of algorithms mentioned in Section \ref{sec:algo} on a small reconciliation setting by only consider a subset of the forecasts (daily sort forecasts and monthly sort forecasts). For this problem, $\mathbf{A}\in\mathbb{R}^{3,384\times 106,408}$, meaning we have 3,384 constraints and $106,408$ forecasts.

In Table \ref{table:comparison1}, we compare the computing time of getting the LSQR solution (without the non-negativity constraint, as in Eq.~\eqref{eq:lsqr}), with algorithms including Alternating Projection, Dykstra's, and ADMM to incorporate the non-negativity constraint. We set $\epsilon_{\text{abs}}=10^{-7}$ and $\epsilon_{\text{rel}}=3\times10^{-8}$ for ADMM. Since $\epsilon_{\text{primal}}\approx \sqrt{n}\epsilon_{\text{abs}}$ with $\sqrt{n}$ around 300, we set $\epsilon_{\text{iter}}=\epsilon_{\text{fea}}=3\times 10^{-5}$ to roughly match their different stopping criteria. We also present the performance of Python's solver \texttt{CVXOPT}, which implements an interior-point algorithm, with $\epsilon_{\text{abs}}=10^{-7},\epsilon_{\text{rel}}=3\times 10^{-8},\epsilon_{\text{fea}}=3\times 10^{-5}$ too. 

\begin{table*}[ht]
\scriptsize
\centering
\begin{tabular}{|c|c|c|c|c|c|}
\hline
\textbf{Algorithm} & \textbf{Time} & $\frac{\|\mathbf{y}^*-\widehat{\mathbf{y}}\|}{\|\mathbf{y}^*\|}$ & $\|(\mathbf{y}^*)_-\|$ & $\|\mathbf{A}\mathbf{y}^*\|$ & \textbf{Note} \\ \hline
LSQR & 0.01s & 0.078 & $1\times10^{6}$ & $1.1\times10^{-6}$ &  \begin{tabular}[c]{@{}c@{}}Closed-form solution;\\ Does not consider the non-negativity constraint.\end{tabular}\\ \hline
\begin{tabular}[c]{@{}c@{}}Alternating\\ Projection\end{tabular} & 4.65s & 0.078 & $3\times10^{-5}$ & $6.2\times10^{-7}$ &  \begin{tabular}[c]{@{}c@{}}Simple and efficient;\\ Might not converge to the global optimal.\end{tabular}\\ \hline
Dykstra's & 5.30s & 0.078 & $3\times 10^{-5}$ & $1.1\times10^{-6}$ & \begin{tabular}[c]{@{}c@{}}Simple and efficient;\\ Always converges to the global optimal;\\ No guarantee for obj. optimality w. common stopping rules.\end{tabular} \\ \hline
ADMM & 11.60s & 0.027 & 121 & $9.7\times10^{-7}$ & \begin{tabular}[c]{@{}c@{}}Obj. suboptimality bound w. common stopping criteria;\\ Converges to modest accuracy quickly.\end{tabular}  \\ \hline
\begin{tabular}[c]{@{}c@{}}CVXOPT\\ (Interior-point)\end{tabular} & 217.93s & 0.078  & 0.0 & $1.1\times10^{-6}$ & \begin{tabular}[c]{@{}c@{}}Easy to implement in Python;\\ Designed for dense matrices; computationally expensive.\end{tabular} \\ \hline
\end{tabular}
\caption{Comparison Table for different algorithms on forecast reconciliation of size $106,408$. The third column, $\|\mathbf{y}^*-\widehat{\mathbf{y}}\|/ \|\mathbf{y}^*\|$, measures how far the reconciled forecasts are from the original.   The fourth and fifth columns measure if the reconciled forecasts violate the constraints.  }
\label{table:comparison1}
\end{table*}

The alternating projections method and Dykstra's algorithm perform similarly---both are efficient and their feasibility violations look trivial. ADMM takes a marginally longer time to converge given the above-mentioned tolerance levels. It gives a solution closer to the original forecasts, but its violation of the non-negativity constraint is larger than Dykstra's yet still acceptable given the size of $\mathbf{y}$. In Section \ref{sec:algo} we mention that \texttt{CVXOPT} is sensitive to poor scaling; indeed, based on this experiment, it's much slower than others.  Figure \ref{fig:comparison_small1} visualizes the results for one specific timeseries (additional plots are available in Appendix \ref{sec:app5}).

\begin{figure}[ht]
\centering
\includegraphics[width=\linewidth]{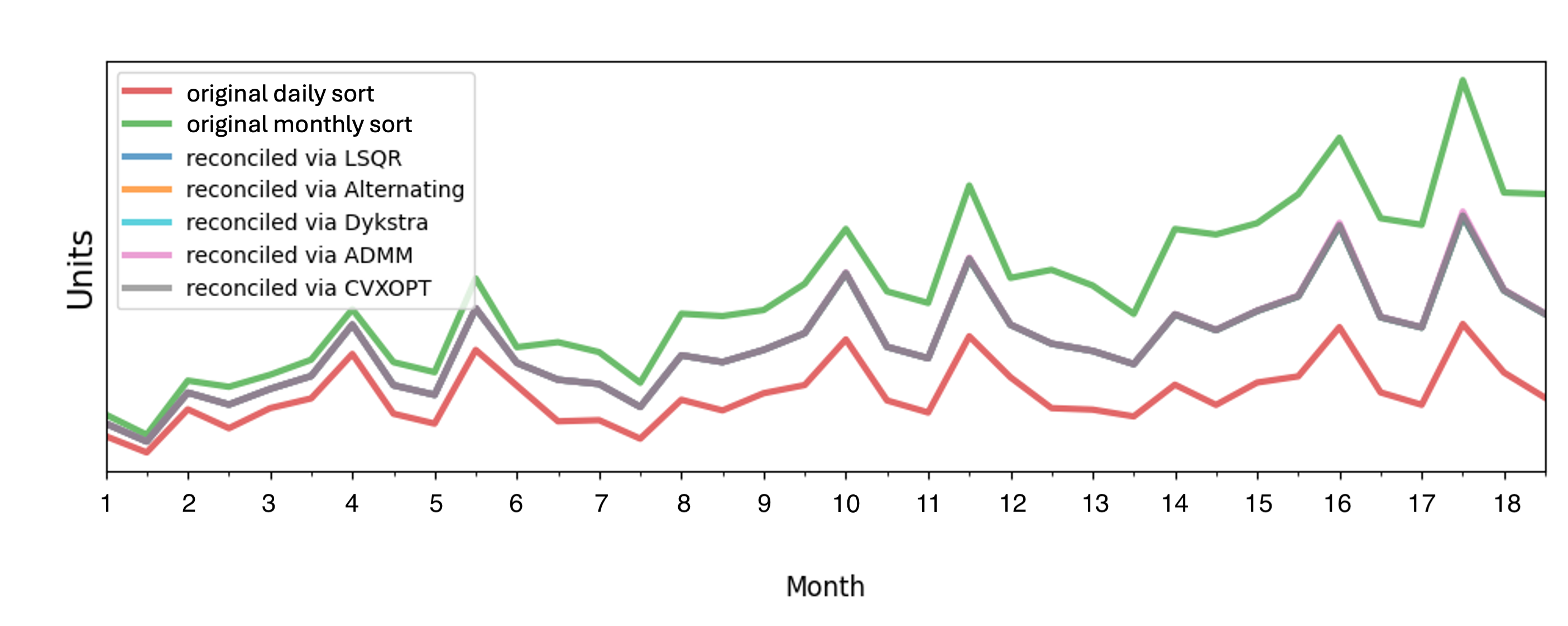}
\caption{Visualization of result for a problem size $106,408$.  All the algorithms, except LSQR, solve the same underlying problem and thus produce the same reconciled forecasts. LSQR does not have a non-negativity constraint, but still produces very similar results for this visualization. }
\label{fig:comparison_small1}
\end{figure}

\subsection{Implementation on Large-scale Forecasts}\label{sec:exp_large}

We use selected algorithms (LSQR, Alternating Projection, and Dykstra's) to complete large-scale forecast reconciliation; specifically, 
$\mathbf{y} \in \mathbb{R}^{4,185,173,500} \mbox{ and } \mathbf{A} \in \mathbb{R}^{47,213 \times 4,185,173,500}$.
For Alternating Projection method and Dykstra, we set $\epsilon_{\text{iter}}=1000$ (very small since $\|\widehat{\mathbf{y}}\|\propto10^{10}$) and $\epsilon_{\text{fea}}=10000$ (very small compared to $\|(\mathbf{y}^*_{\text{LSQR}})_{-}\|>3\times 10^6$). Their performances are summarized in Table \ref{table:comparison2}. We can see that within acceptable time, both Alternating Projection and Dykstra's algorithm generate optimal $\mathbf{y}^*$ that are close to the original forecast with minimal violation of the non-negativity constraint.

\begin{table}[ht]
\centering
\begin{tabular}{|c|c|c|c|c|}
\hline
\textbf{Algorithm} & \textbf{Time} & $\frac{\|\mathbf{y}^*-\widehat{\mathbf{y}}\|}{\|\mathbf{y}^*\|}$ & $\|(\mathbf{y}^*)_-\|$ \\ \hline
LSQR & 17.2 min & 0.096 & 5,438,210 \\ \hline
\begin{tabular}[c]{@{}c@{}}Alternating\\ Projection\end{tabular} & 388.1 min & 0.096 & 9,650 \\ \hline
Dykstra's & 489.3 min & 0.096 & 9,650 \\ \hline
\end{tabular}
\caption{Comparison Table for different algorithms on forecast reconciliation of size 4,185,173,500.}
\label{table:comparison2}
\end{table}

In Figure \ref{fig:comparison2_1}, we plot the original forecasts (except product-ID forecasts, which is weekly data), our reconciled values, and the actual units for one specific category, and we present another in Figure \ref{fig:comparison2_2} in Appendix \ref{sec:app5}. In each figure, the top subplot is the units over time, and the bottom one is the percentage difference to the actual values. We can see that the reconciled values lie between unreconciled forecasts most of the time. We also report the mean absolute percentage difference (MAPE) for each forecast in Table \ref{table:avg_error}. The daily sort and monthly sort forecasts are very close to the actual units, while the monthly family and daily product group forecasts are less accurate. The reconciled values from our algorithms lie in between. In fact, running this experiment with historical data can guide us to choose proper weights. For example, since daily sort and monthly sort forecasts are more accurate historically, we can assign more weights to them in our optimization framework. If we increase the weight of daily sort-fine-grained to $5,000$ (from $1,000$) and the weight of monthly sort to $500,000$ (from $50,000$), the reconciled values (displayed in brown dashed lines in Figure \ref{fig:comparison2_1} and \ref{fig:comparison2_2}) become closer to the actual values and we get smaller MAPEs as well. 

\begin{figure}[ht]
\centering
    \includegraphics[width=\linewidth]{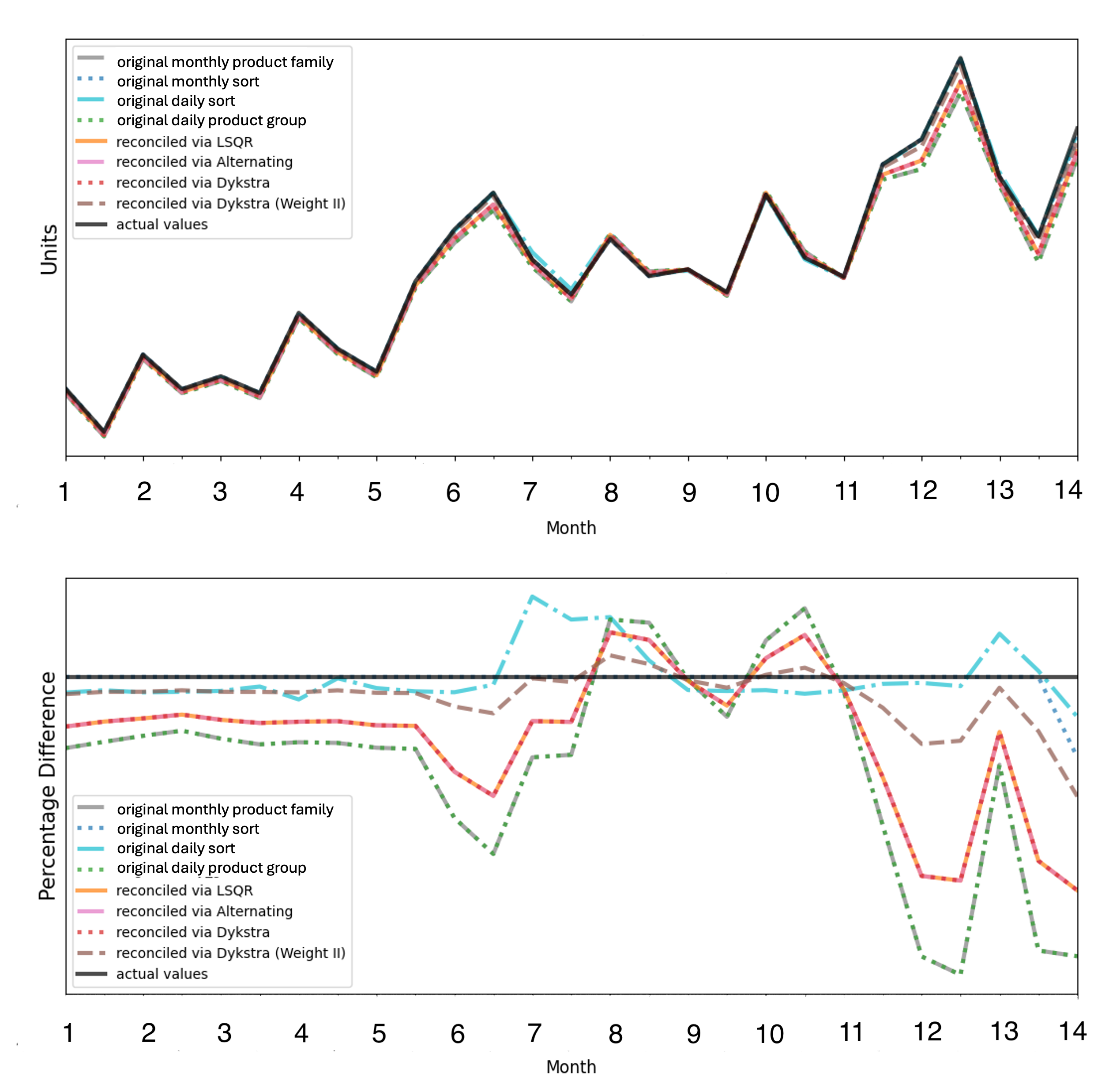}
\caption{Visualization of result for large-scale reconciliation for a segment of data. The top subplot is Units vs Month, while the bottom subplot is the percentage difference to the actual values vs Month.}
\label{fig:comparison2_1}
\end{figure}

\begin{table*}
\scriptsize
\centering
\begin{tabular}{|c|c|c|c|c|c|c|c|c|}
\hline
\textbf{Algorithm} & monthly product family    & monthly sort    & daily sort    & daily product group    & LSQR    & AP      & Dykstra & Dykstra w. Weight II \\ \hline
Segment 1 & 0.0120 & 0.0003 & 0.0024 & 0.0120 & 0.0082 & 0.0082 & 0.0082 & 0.0028\\ \hline
Segment 2 & 0.0194 & 0.0019 & 0.0030 & 0.0194 & 0.0126 & 0.0126 & 0.0126 & 0.0049\\ \hline
\end{tabular}
\caption{MAPE (Mean-absolute-percent-error) for original forecasts and results of reconciliation for for two segments.}
\label{table:avg_error}
\end{table*}

We also plot the reconciliation results between daily product-group and weekly SKU-level forecasts for one specific group in Figure \ref{fig:comparison_plot_1}. We present the result for another group in Appendix \ref{sec:app5}. We can see that our reconciled values lie between the two original forecasts, and for this product group, our reconciled values are quite close to the actual units.

\begin{figure}[ht]
\centering
\includegraphics[width=\linewidth]{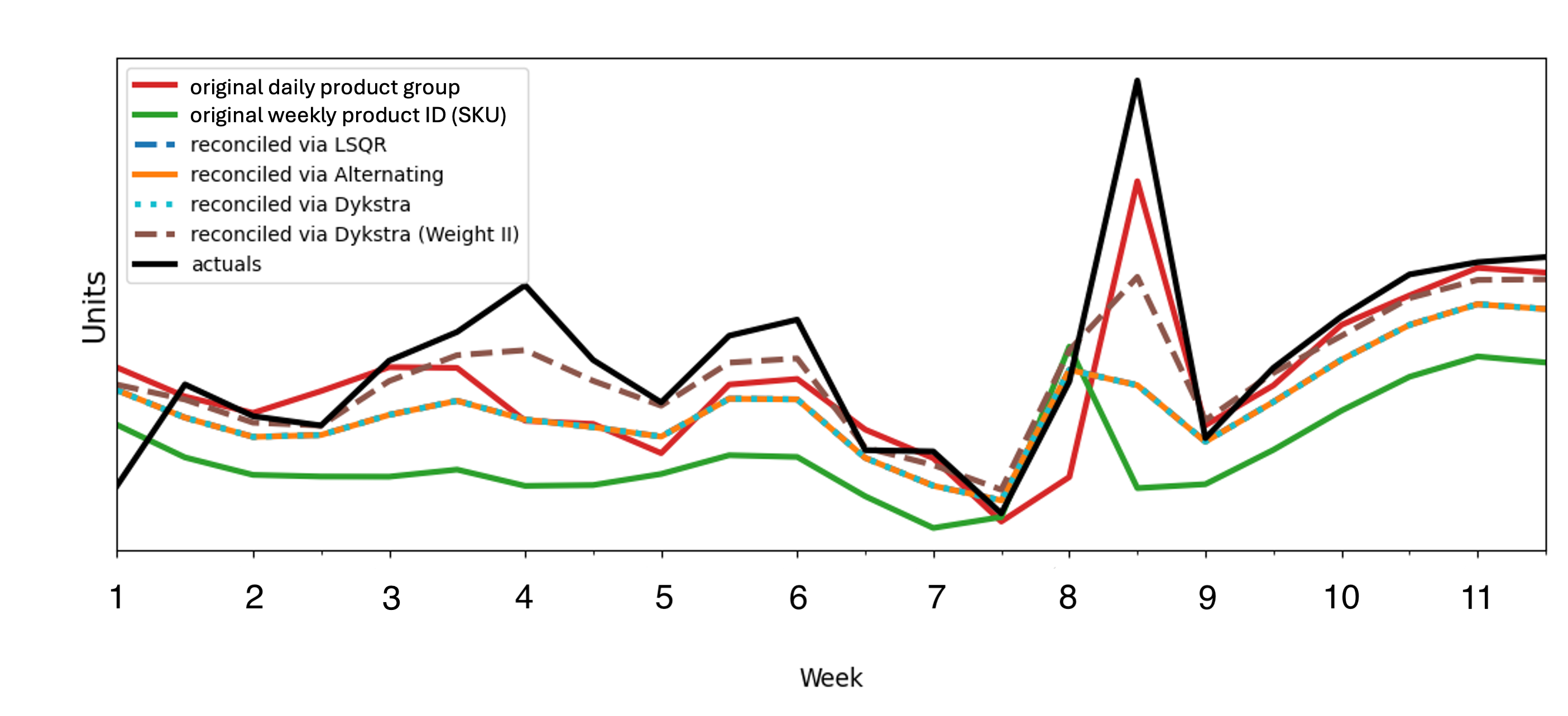}
\caption{Visualization of result for a problem size ($>$4 billion) for a segment of data.}
\label{fig:comparison_plot_1}
\end{figure}

\section{Discussion and Extensions}\label{sec:discussion}
\subsection{Absolute Error vs. Percentage Error}\label{sec:discussion_pct}

Because the scales of the forecasts to be reconciled (i.e. $\mathbf{y}$) differ greatly, minimizing $(\mathbf{y} - \mathbf{\widehat{y}})^\top\mathbf{W}(\mathbf{y}-\mathbf{\widehat{y}})$ can yield dramatic changes to small-scale forecasts. Motivated by the work of Davies \cite{davies2019_ext}, we consider minimizing the percentage loss $\|\frac{\mathbf{\widehat{y}}-\mathbf{y}}{\mathbf{\widehat{y}}}\|$ instead of $\|\mathbf{\widehat{y}}-\mathbf{y}\|$. The objective in Eq.~\eqref{eqn:basic} can be written as:
\begin{equation}
\begin{aligned} \label{eqn:basic_pct}
 \left(\frac{\mathbf{y} - \mathbf{\widehat{y}}}{\mathbf{\widehat{y}}+\epsilon}\right)^\top\mathbf{W}\left(\frac{\mathbf{y} - \mathbf{\widehat{y}}}{\mathbf{\widehat{y}}+\epsilon}\right)&=(\mathbf{y} - \mathbf{\widehat{y}})^\top\mathbf{\widehat{M}}\mathbf{W}(\mathbf{y}-\mathbf{\widehat{y}})\\
\end{aligned}
\end{equation}
where $\mathbf{\widehat{M}}$ is a diagonal matrix with diagonal $(\frac{1}{\mathbf{\widehat{y}}+\epsilon})^2$, and $\epsilon$ is a small fixed value to avoid $0$ in the denominator (in our experiments we set $\epsilon=1$). Although $\mathbf{\widehat{M}}$ and $\mathbf{W}$ have different roles---one takes care of the scale difference while the other incorporates our beliefs in different forecast models, mathematically we can treat $\mathbf{\widehat{M}}\mathbf{W}$ as a general weight matrix $\mathbf{W}$. For simplicity, we still use $(\mathbf{y} - \mathbf{\widehat{y}})^\top\mathbf{W}(\mathbf{y}-\mathbf{\widehat{y}})$ in the main draft.

\subsection{When is the least squares solution nonnegative?}\label{sec:discussion_share}

Although solving the large-scale optimization problem Eq.~\eqref{eqn:basic} with both equality and inequality constraints can be hard, we find that under certain conditions on the structure of the aggregation matrix $\mathbf{A}$ and weight matrix $\mathbf{W}$, the non-negativity inequality constraint is redundant. 

An aggregation matrix $\mathbf{A}$ has \emph{rows with disjoint supports} if there exists a single nonzero entry in each column of $\mathbf{A}$. This corresponds to  a two-level hierarchy where, for example, fine-grained regional forecasts are aggregated to the state level. In particular, a matrix with a single aggregation constraint, $\A = \begin{bmatrix} -1 & 1 & \cdots & 1 \end{bmatrix}$ representing a simple two-leveled hierarchy, has rows with disjoint support. The proof of the following theorems can be found in Appendix \ref{appendix:proof_nonnegativity} (see Corollary \ref{cor:disjoint_row_supports}).

\begin{theorem}\label{thm:nonneg}
Consider the forecast reconciliation problem \eqref{eqn:basic} with original forecast $\mathbf{\widehat{y}}> 0$. Assume $\mathbf{A}\in\mathbb{R}^{K\times N}$ has rows with disjoint supports and, the weight matrix $\mathbf{W}\in\mathbb{R}^{N\times N}$ is diagonal with $W_{nn}=1/\widehat{y}_n$. If there exists an optimal solution satisfying $\mathbf{A}\mathbf{y}=\mathbf{0}$, then the least squares solution to Eq.~\eqref{eq:lsqr} is positive, and thus is the optimal solution to Eq.~\eqref{eqn:basic}.  
\end{theorem}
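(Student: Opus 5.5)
The plan is to compute the least squares solution \eqref{eq:lsqr} explicitly. The disjoint-support hypothesis makes $\mathbf{A}\mathbf{W}^{-1}\mathbf{A}^\top$ diagonal, so the whole formula decouples row by row.

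Write $S_k$ for the support of row $k$ of $\mathbf{A}$. Entries of $\mathbf{A}$ lie in $\{0,1,-1\}$, and the $S_k$ are pairwise disjoint; we assume each row is nonzero, so that the inverse in \eqref{eq:lsqr} exists. Since $\mathbf{W}^{-1}=\diag(\widehat{\mathbf{y}})$, the off-diagonal entries $\sum_i A_{ki}A_{li}\widehat{y}_i$ vanish for $k\neq l$. The diagonal entries are $d_k=\sum_{i\in S_k}A_{ki}^2\widehat{y}_i=P_k+N_k$, where
\[
P_k=\sum_{i\in S_k,\,A_{ki}=1}\widehat{y}_i,\qquad N_k=\sum_{i\in S_k,\,A_{ki}=-1}\widehat{y}_i .
\]
Similarly, $(\mathbf{A}\widehat{\mathbf{y}})_k=P_k-N_k$.

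Substituting these into \eqref{eq:lsqr} gives a closed form coordinate by coordinate.
\begin{itemize}
\item For $i\notin\bigcup_k S_k$ the column of $\mathbf{A}$ is zero, so $y_i=\widehat{y}_i>0$.
\item For $i\in S_k$ with $A_{ki}=1$,
\[
y_i=\widehat{y}_i\Bigl(1-\frac{P_k-N_k}{P_k+N_k}\Bigr)=\widehat{y}_i\,\frac{2N_k}{P_k+N_k}.
\]
\item For $i\in S_k$ with $A_{ki}=-1$, symmetrically $y_i=\widehat{y}_i\,\frac{2P_k}{P_k+N_k}$.
\end{itemize}
These expressions are exactly share-based scaling: within each aggregation group, every forecast on one side is rescaled by a common factor.

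Positivity then reduces to showing $P_k>0$ and $N_k>0$ for every $k$. Because $\widehat{\mathbf{y}}>0$, this is equivalent to every row of $\mathbf{A}$ containing both a $+1$ and a $-1$.

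This step is where the hypothesis that some solution satisfies $\mathbf{A}\mathbf{y}=\mathbf{0}$ must be used, and it is the main obstacle. Suppose a row had entries of a single sign. Then $\mathbf{A}\mathbf{y}=\mathbf{0}$ together with $\mathbf{y}\ge\mathbf{0}$ would force $y_i=0$ on all of $S_k$. In that case the formula above gives $y_i=0$ there: still nonnegative and still optimal, but not strictly positive. So I would argue that the intended meaning of the hypothesis, a genuine aggregation in which a total equals a sum of parts, excludes such degenerate rows. I would state this interpretation explicitly, and if necessary weaken the conclusion in that corner case to ``nonnegative''.

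Finally, the least squares solution minimizes the objective over the larger set $\{\mathbf{y}:\mathbf{A}\mathbf{y}=\mathbf{0}\}$. It is also feasible for \eqref{eqn:basic}, since it is nonnegative. It is therefore optimal for \eqref{eqn:basic}, and by strong convexity it is the unique optimum.
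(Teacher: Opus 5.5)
Your proposal is correct and follows the paper's route. The paper likewise uses disjoint supports to make $\mathbf{A}\mathbf{W}^{-1}\mathbf{A}^\top$ diagonal and derives the same entrywise formula (Theorem~\ref{thm:weighted_lsqr_entries}); substituting $w_m = 1/\widehat{y}_m$ then gives exactly your factors $2N_k/(P_k+N_k)$ and $2P_k/(P_k+N_k)$ (Corollary~\ref{cor:disjoint_row_supports}). Your caveat about strict positivity is well taken: the paper's own argument only establishes $\mathbf{y}^*_{\text{LSQR}} \geq \mathbf{0}$, and the stated hypothesis is automatically satisfied (since $\mathbf{y}=\mathbf{0}$ is feasible), so it does not exclude single-signed rows.
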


From this theorem we learn that if setting weights $W_{nn}=1/\widehat{y}_n$ is a reasonable choice and if $\mathbf{A}$ has rows with disjoint supports---albeit a strong restriction---we can simply drop the non-negativity constraint in the optimization problem and use the closed form solution Eq.~\eqref{eq:lsqr}. We note that Theorem \ref{thm:nonneg} does not hold for arbitrary aggregation matrices $\mathbf{A}$, and provide a counterexample in Appendix \ref{appendix:counterexample}. Despite this, we found experimentally that in many cases the least squares solution was nonnegative. 

Next we consider two variations on the $W_{nn} = 1/\widehat{y}_n$ weighting scheme. We will say the rows of $\A$ are indexed by aggregation \emph{constraints} and the columns are indexed by forecasted \emph{items}. We think of $\A$ as representing aggregations in a hierarchical structure $\hier$. We assume the hierarchy $\hier$ is \emph{tree-based}, meaning it does not have cycles (for example, if $\yhat_n$ represents a forecast of item $I$ in state $S$, then $I$ could be aggregated to a product category and $S$ could be aggregated to a region or both could be aggregated to total regional sales. In a \emph{tree-based} hierarchy each item has at most one parent. A tree-based hierarchy is \emph{strict} if each parent item is the aggregation of exactly one set of child items. Tree-based hierarchies are sufficient to support top-down disaggregation; strict tree-based hierarchies are sufficient to support buttom-up aggregation.

The \emph{depth} of item $n$, denoted $D(n)$ is defined recursively so that a top-level item with no parent has depth $0$ and the depth of any other item is $1$ more than that of its (necessarily unique) parent. Conversely, the \emph{height} of item $n$, denoted $H(n)$ is defined by $H(n) = D_{\max} - D(n)$, where $D_{\max}$ is the maximum depth of any item in $\hier$. In particular this implies that $H(c) = H(p) - 1$ if $c$ is a child of $p$. If $\A$ is the aggregation matrix of a tree-based hierarchy, we can assume wlog that the rows of $\A$ are ordered so that $H(p_1) \geq H(p_2) \geq \cdots \geq H(p_K)$. See Figure \ref{fig:tree_example} in Appendix \ref{appendix:weighting_proofs} for an example.

For a constant $M$, let $\W$ be the diagonal matrix with $w_n = M^{H(n)} / \yhat_n$, meaning the weights are inversely proportional to the value of the initial prediction $\yhat_n$ and scale exponentially in proportion to the height. We call $\W$ the \emph{top-heavy} weighting of $\hier$. Similarly, we can define the \emph{bottom-heavy} weighting of $\hier$ by setting $w_n = M^{D(n)} / \yhat_n$. 

The following result shows that in the limit, the least squares solutions to the top-heavy and bottom-heavy weighting schemes converge, respectively, to the top-down (share-based) and bottom-up reconciliation solutions with the top-level (resp., bottom-level) items fixed by $\yhat$. The proof can be found in Appendix \ref{appendix:weighting_proofs}. This is important as it means the optimization based approach can be viewed as a flexible extension of traditional top-down and bottom-up forecasting. It also implies that including the non-negativity constraint in Eq.~\eqref{eqn:basic} is redundant in the sense that the solution to the unconstrained optimization is \emph{always} non-negative in this setting.

\begin{theorem}\label{thm:top_bottom_heavy_aggs}
Let $\hier$ be a tree-based hierarchy with aggregation matrix $\A$.
    \begin{enumerate}
        \item  Let $\W$ be the top-heavy weighting of $\hier$. Let $\y^*(M)$ be the corresponding least-squares solution to Eq.~\eqref{eq:lsqr}. Then $\lim\limits_{M \rightarrow \infty} \y^*(M)$ exists and converges to the share-based disaggregation over $\hier$ where $\lim\limits_{M \rightarrow \infty} y^*_n(M) = \yhat_n$ for any top-level node $n$ and the shares are proportional to the values of $\yhat$ (see Theorem \ref{thm:top_level_limits} and Corollary \ref{cor:top_down_convergence}).
        \item Assume $\hier$ is strict and let $\W$ be the bottom-heavy weighting of $\hier$. Then $\lim\limits_{M \rightarrow \infty} \y^*(M)$ exists and converges to the bottom-up aggregation over $\hier$ where $\lim\limits_{M \rightarrow \infty} y^*_n(M) = \yhat_n$ for any bottom-level node $n$ (see Corollary \ref{cor:depth_based_convergence}).
    \end{enumerate}
\end{theorem}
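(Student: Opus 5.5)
Since the problem has only linear equality constraints, I will study the least-squares solution through its KKT system. Each constraint $k$ corresponds to a parent $p_k$ with child set $C_k$, and says $y_{p_k}=\sum_{c\in C_k} y_c$. With diagonal weights $w_n$, stationarity gives $w_n(y_n-\yhat_n) = -(\A^\top\blam)_n$. So for each item $n$,
\[
y_n = \yhat_n - \frac{1}{w_n}\sum_k A_{kn}\lambda_k .
\]
Substituting $w_n = M^{h(n)}/\yhat_n$, where $h=H$ for top-heavy and $h=D$ for bottom-heavy, gives $y_n = \yhat_n\bigl(1 - M^{-h(n)}\sum_k A_{kn}\lambda_k\bigr)$. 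In a tree-based hierarchy an item $n$ appears in at most one constraint as a child and, if strict, at most one as a parent. Each $y_n$ is therefore $\yhat_n$ times an affine expression in at most two multipliers: the one for the constraint where $n$ is the parent, and the one where $n$ is a child.

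For the top-heavy case I will rescale the multipliers by powers of $M$ matched to the parent's height, say $\mu_k = \lambda_k M^{-H(p_k)+1}$, so that the child term becomes $O(1)$ and the parent term is $O(1/M)$. I will then plug into the constraints $y_{p_k}=\sum_{c\in C_k} y_c$ and process the rows in the given height ordering, from top to bottom.

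For a top-level parent, the rescaled equations show that the parent's own adjustment vanishes as $M\to\infty$, so $y_p^*\to\yhat_p$. Every child $c$ of $p$ then has the form $\yhat_c(1+\mu_k + o(1))$, where $\mu_k$ is common to all children of the constraint; this common factor is exactly what produces a share. The constraint forces $1+\mu_k = y_p/\sum_{c\in C_k}\yhat_c + o(1)$, so $y_c \to \yhat_c\, y_p/\sum_{c'\in C_k}\yhat_{c'}$. Continuing by induction down the heights, each child's own downward constraint contributes only lower-order corrections to its value. This gives the share-based disaggregation; it is positive because $\yhat>0$, which is what makes the non-negativity constraint redundant. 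I would make the limit rigorous by writing the linear system in the rescaled variables as $(\mathbf{B}_0 + M^{-1}\mathbf{B}_1 + \dots)\bm{\mu} = \mathbf{b}$ and showing that $\mathbf{B}_0$ is invertible. For the top-heavy weighting, $\mathbf{B}_0$ should be block-triangular in the height ordering, with diagonal entries $\sum_{c\in C_k}\yhat_c>0$. The solution is then continuous in $1/M$ at $0$, and the limit exists.

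The bottom-heavy case is symmetric with $D$ in place of $H$, and this is where strictness is used. Leaves now carry the heaviest weights, so their adjustments vanish and $y_c\to\yhat_c$. Each parent is tied to exactly one child set, so its value is forced to equal the sum of its children; inducting upward by depth gives the bottom-up aggregation. Here the triangular structure runs in the opposite order.

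The step I expect to be the main obstacle is the choice of scaling exponents and the verification that the leading-order matrix $\mathbf{B}_0$ is nonsingular and triangular. An item sits in two constraints, as a child in one and as a parent in another, and these contribute multipliers at different orders of $M$. I must check that the cross terms are genuinely lower order, particularly when a non-strict tree has a parent with several child sets in the top-heavy case. If the direct expansion becomes messy, my fallback is to argue from the objective instead. Dividing the objective by the largest power $M^{H_{\max}}$ and taking a $\Gamma$-limit-style argument level by level shows that minimizers lexicographically minimize the weighted errors, top level first. The lexicographic minimizer is the share-based solution, because minimizing $\sum_{c\in C_k}(y_c-\yhat_c)^2/\yhat_c$ subject to a fixed sum gives proportional shares. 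Boundedness of $\y^*(M)$, which follows by comparing the objective with the share-based feasible point, then yields convergence of the whole sequence, since the limit point is unique.
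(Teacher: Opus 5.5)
Your argument is correct. It shares the paper's skeleton: the stationarity relation $y_n=\widehat{y}_n\bigl(1-M^{-h(n)}(\mathbf{A}^\top\bm{\lambda})_n\bigr)$, multipliers rescaled by powers of $M$ tied to the parent's level, and triangular structure in the height ordering. Your limit step, however, is genuinely different.

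\textbf{The paper's route.} It rescales by $M^{H(p_k)}$ and sets $\mathbf{Q}=\mathbf{A}\mathbf{W}^{-1}\mathbf{A}^\top\mathbf{L}$. It proves $\mathbf{Q}^{-1}\to\mathbf{0}$ by counting degrees in $M$: $\det\mathbf{Q}$ has degree exactly $K$, and the cofactors have degree at most $K-1$. It then needs a separate rational-function argument (Corollary~\ref{cor:height_lambdas_converge}) to upgrade $\lambda_k/M^{H(p_k)}\to 0$ to the existence of $\lim_{M\to\infty}\lambda_k/M^{H(p_k)-1}$. Only after that are the shares read off the stationarity equations.

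\textbf{Your route.} Rescaling directly by $M^{H(p_k)-1}$ makes your system matrix exactly $\mathbf{Q}/M=\mathbf{B}_0+M^{-1}\mathbf{B}_1$, with no higher-order terms. In the paper's row ordering, $\mathbf{B}_0$ is lower triangular with diagonal $\sum_{c\in C_k}\widehat{y}_c$. Each row has at most one other order-one entry, namely $-\widehat{y}_{p_k}$ in the column of the constraint where $p_k$ is a child. The couplings you worried about, between several constraints sharing a parent $p$, enter only through that parent's own adjustment $-M^{-1}\sum_{j:p_j=p}\mu_j$. So they sit in $\mathbf{B}_1$, and non-strictness is harmless in the top-heavy case. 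Continuity of inversion at $\mathbf{B}_0$ then gives convergence in one step, together with the explicit limit $\bm{\mu}^*=\mathbf{B}_0^{-1}\mathbf{A}\widehat{\mathbf{y}}$. Forward substitution on this limit is literally the share recursion $(1+\mu_k^*)\sum_{c\in C_k}\widehat{y}_c=y^*_{p_k}$.

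\textbf{Comparison.} The leading coefficient of $\det\mathbf{Q}$ is $\det\mathbf{B}_0$, so the paper's degree count is this invertibility in disguise. Your version identifies the limit and replaces the paper's two-stage argument with a single step.

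\textbf{Bottom-heavy case.} ``Symmetric with $D$ in place of $H$'' must mean $\nu_k=\lambda_kM^{-D(p_k)}$, so that the parent term becomes order one. The literal substitution $\lambda_kM^{-D(p_k)+1}$ would make the leading matrix vanish. With $\nu$, the order-one matrix has $\widehat{y}_{p_k}$ in every column $j$ with $p_j=p_k$, and $-\widehat{y}_c$ in every column $j$ with $p_j=c\in C_k$. When $\mathcal{H}$ is strict, it is upper triangular with diagonal $\widehat{y}_{p_k}$. Otherwise it is singular: constraints sharing a parent $p$ produce a diagonal block $\widehat{y}_p\mathbf{1}\mathbf{1}^\top$. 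This is exactly the obstruction described in the paper's footnote.

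\textbf{Fallback.} You do not need it, which is fortunate. Comparing objectives with the share-based point bounds only the top two height levels, so boundedness of $\mathbf{y}^*(M)$ would require a further level-by-level argument.
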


One immediate extension is to set all weights to be the reciprocal of the corresponding forecasted value, including the weight of the aggregate forecast, and we then have the weight setting in Theorem \ref{thm:nonneg}. On the other hand, this discussion indicates one advantage of using our optimization approach, which allows for more weighting possibilities. For example, as discussed in Section \ref{sec:discussion}, one can also set the weight as $(1/\widehat{\mathbf{y}})^2$ to consider the percentage error. In short, with our optimization formulation, we can set the weights in any way we like.

\section{Related Work}\label{sec:related}
Forecast reconciliation methods have evolved through numerous formulations, as documented in various studies (e.g., \cite{athanasopoulos,wickramasuriya}) and we refer the reader those survey articles.  The formulation in (\ref{eqn:basic}) was studied for reconciliation of financial data \cite{byron1978estimation}. The central importance of forecast reconciliation is certainly not new.  Our works specifically adopts a formulation that frames forecast reconciliation as a constrained quadratic programming problem. This approach is particularly valuable as it accommodates arbitrary linear constraints, enabling both multiple hierarchies and non-hierarchical constraints. While much work has been developed to address this class of problems (\cite{pang,dykstra,boyd}), our focus is on algorithms suitable for industry-scale applications. Given that existing packages often face limitations with larger problem sizes, we explore methods specifically designed for massive-scale implementations. 

While a number of works discuss scaling limitations of forecast reconciliation, to the best of our knowledge, our work considers problems that are orders of magnitude larger than those published previously (for example, see \cite{SPRANGERS20241689} which discussed scaling into the thousands).  Comparison is challenging since much work does not consider temporal hierarchies, which significantly reduces the scale of the problems.  For example, the public M5 dataset, the dimension is 12,350 as formulations do not include temporal constraints and the problem can be solved independently for each time period, and direct or out of the box optimization packages work.

\bibliographystyle{abbrv}
\bibliography{citation_external}

\appendix
\section*{Appendix}

\section{Aggregation matrices with disjoint row supports}\label{sec:app3} \label{appendix:proof_nonnegativity}
Throughout this section we fix the following notation:
\begin{itemize}
    \item $\mathbf{A} = (a_{k,n}) \in \{0,1,-1\}^{K \times N}$ is an aggregation matrix;
    \item  $A_1,\ldots,A_K$ are the row vectors of $\mathbf{A}$;
    \item $\mathbf{W}$ is a diagonal weight matrix with diagonal entries $\mathbf{W}_{n,n}:=w_n > 0$ for $1 \leq n \leq N$;
    \item $\mathbf{\widehat{y}} > \mathbf{0}$ is an initial (column) forecast vector;
    \item $\mathbf{y}^*_{\text{LSQR}}$ is the least squares solution from Eq.~\eqref{eqn:basic}.
\end{itemize}

\begin{theorem}\label{thm:weighted_lsqr_entries}
    Assume $\mathbf{A}$ has rows with disjoint supports. For any $1 \leq m \leq N$, 
    \begin{equation} \label{eq:weighted_lsqr_soln}
        (\mathbf{y}^*_{\text{LSQR}})_m = \mathbf{\widehat{y}}_m - \frac{A_{\ell}\mathbf{\widehat{y}}}{A_{\ell} \mathbf{W}^{-1} A_\ell^{\top}} \frac{a_{\ell,m}}{w_{m}},
    \end{equation}
    where $\ell$ is the unique row index for which the $m$th column of $\mathbf{A}$ is nonzero. 
\end{theorem}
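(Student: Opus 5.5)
The plan is to start from the closed form \eqref{eq:lsqr}, $\mathbf{y}^*_{\text{LSQR}}=\mathbf{\widehat{y}}-\mathbf{W}^{-1}\mathbf{A}^\top (\mathbf{A}\mathbf{W}^{-1}\mathbf{A}^\top)^{-1}\mathbf{A}\mathbf{\widehat{y}}$, and show that disjoint row supports make the Gram matrix $\mathbf{A}\mathbf{W}^{-1}\mathbf{A}^\top$ diagonal. The $(k,k')$ entry of this matrix is $\sum_{n} a_{k,n}a_{k',n}/w_n$. If $k\neq k'$, every term vanishes, because each column $n$ has at most one nonzero entry, so $a_{k,n}$ and $a_{k',n}$ cannot both be nonzero. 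The diagonal entry is $\sum_n a_{k,n}^2/w_n = A_k\mathbf{W}^{-1}A_k^\top$.

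This diagonal entry is strictly positive whenever row $A_k$ is nonzero, since $w_n>0$. We may assume no row of $\mathbf{A}$ is identically zero: such a row imposes no constraint and would make the matrix singular, so the inverse in \eqref{eq:lsqr} implicitly excludes it. Hence $(\mathbf{A}\mathbf{W}^{-1}\mathbf{A}^\top)^{-1}$ is the diagonal matrix with entries $1/(A_k\mathbf{W}^{-1}A_k^\top)$.

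Next, set $\boldsymbol{\lambda} := (\mathbf{A}\mathbf{W}^{-1}\mathbf{A}^\top)^{-1}\mathbf{A}\mathbf{\widehat{y}}$, whose $k$th component is $\lambda_k = A_k\mathbf{\widehat{y}}/(A_k\mathbf{W}^{-1}A_k^\top)$. The $m$th entry of $\mathbf{W}^{-1}\mathbf{A}^\top\boldsymbol{\lambda}$ is $\frac{1}{w_m}\sum_k a_{k,m}\lambda_k$. By the disjoint-support hypothesis, only $k=\ell$, the unique row with $a_{\ell,m}\neq 0$, contributes, so this entry equals $\frac{a_{\ell,m}}{w_m}\lambda_\ell$. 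Substituting into \eqref{eq:lsqr} gives exactly \eqref{eq:weighted_lsqr_soln}.

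There is no real obstacle; the one point needing care is the invertibility and diagonality of the Gram matrix. We should also remark that if column $m$ of $\mathbf{A}$ is entirely zero, then no index $\ell$ exists and $(\mathbf{y}^*_{\text{LSQR}})_m=\mathbf{\widehat{y}}_m$. The statement implicitly assumes every column has exactly one nonzero entry, which matches the paper's definition of disjoint row supports.
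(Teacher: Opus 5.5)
Your proposal is correct and follows the paper's own proof. Both arguments show that disjoint supports make $\mathbf{A}\mathbf{W}^{-1}\mathbf{A}^\top$ diagonal, so that $\lambda_k = A_k\mathbf{\widehat{y}}/(A_k\mathbf{W}^{-1}A_k^\top)$, and then observe that the $m$th entry of $\mathbf{W}^{-1}\mathbf{A}^\top$ times this vector collapses to the single $k=\ell$ term. Your extra remarks are harmless clarifications that the paper leaves implicit: that no row of $\mathbf{A}$ may be zero, since invertibility requires it, and that each column is assumed to have exactly one nonzero entry.
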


\begin{proof}
    Because the rows of $\mathbf{A}$ have disjoint support and $\W$ is diagonal, $\mathbf{A}\mathbf{W}^{-1}\mathbf{A}^{\top}$ is diagonal.

    Therefore, for any $1 \leq k \leq K$,
    $$
    ((\mathbf{A}\mathbf{W}^{-1}\mathbf{A}^{\top})^{-1}\mathbf{A}\mathbf{\widehat{y}})_k = \frac{A_k\mathbf{\widehat{y}}}{A_k\mathbf{W}^{-1} A_k^{\top}},
    $$
    treating the $1 \times 1$ matrices in the numerator and denominator as scalars. This is because 
    \begin{enumerate}
        \item the diagonal entries of $\mathbf{A}\mathbf{W}^{-1}\mathbf{A}^{\top}$ are $A_k \mathbf{W}^{-1} A_k^{\top}$,
        \item the entries of $\mathbf{A}\mathbf{\widehat{y}}$ are $A_k\mathbf{\widehat{y}}$, and
        \item $(\mathbf{A}\mathbf{W}^{-1}\mathbf{A}^{\top})^{-1}$ acts on the vector $\mathbf{A}\mathbf{\widehat{y}}$ by scalar multiplication of its diagonal entries.
    \end{enumerate} 

    The columns of $\mathbf{W}^{-1}\mathbf{A}^{\top}$ are $\mathbf{W}^{-1} A_k^{\top}$ for $1 \leq k \leq K$, and thus (because multiplying a matrix by a vector expands as a linear combination of the columns of the matrix),
    \begin{equation} \label{eq:lin_combo_expression}
    \mathbf{W}^{-1}\mathbf{A}^{\top}(\mathbf{A}\mathbf{W}^{-1}\mathbf{A}^{\top})^{-1}\mathbf{A}\mathbf{\widehat{y}} = \sum_{k=1}^K\frac{A_k\mathbf{\widehat{y}}}{A_k\mathbf{W}^{-1} A_k^{\top}}\mathbf{W}^{-1} A_k^{\top}.
    \end{equation}

    Now consider the $m$th entry of this resultant vector. Because $(A_k^{\top})_m = a_{k,m}$ is only nonzero when $k = \ell$ and $\mathbf{W}^{-1}$ is diagonal, it follows that $(\mathbf{W}^{-1} A_k^{\top})_m$ is nonzero only when $k = \ell$, in which case its value is $\frac{a_{\ell,m}}{w_{m}}$. Therefore, Eq.~\eqref{eq:lin_combo_expression} collapses to a single summand and
    $$
    (\mathbf{y}^*_{\text{LSQR}})_m = \mathbf{\widehat{y}}_m - \frac{A_{\ell}\mathbf{\widehat{y}}}{A_{\ell} \mathbf{W}^{-1} A_\ell^{\top}} \frac{a_{\ell,m}}{w_{m}}.
    $$    
\end{proof}

\begin{corollary} \label{cor:disjoint_row_supports}
    Assume $\mathbf{A}$ has rows with disjoint supports and $w_n=1/\mathbf{\widehat{y}}_n$ for all $1 \leq n \leq N$. Then $\mathbf{y}^*_{\text{LSQR}} \geq \mathbf{0}$. 
\end{corollary}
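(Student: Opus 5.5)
We plug the weights $w_n = 1/\widehat{\mathbf{y}}_n$ into the entrywise formula of Theorem \ref{thm:weighted_lsqr_entries} and check the sign of each entry directly.

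Fix $m$ and let $\ell$ be the unique row whose support contains $m$. Write $S_\ell$ for the support of $A_\ell$, and split it as $S_\ell^+ = \{n : a_{\ell,n}=1\}$ and $S_\ell^- = \{n : a_{\ell,n}=-1\}$. Set $P = \sum_{n\in S_\ell^+}\widehat{\mathbf{y}}_n$ and $Q = \sum_{n\in S_\ell^-}\widehat{\mathbf{y}}_n$. Since $w_n^{-1}=\widehat{\mathbf{y}}_n$ and $a_{\ell,n}^2=1$ on the support, the two scalars in the formula are
\begin{equation*}
A_\ell \mathbf{W}^{-1}A_\ell^\top = \sum_{n\in S_\ell} a_{\ell,n}^2\,\widehat{\mathbf{y}}_n = P+Q, \qquad A_\ell\widehat{\mathbf{y}} = P-Q.
\end{equation*}
Substituting into \eqref{eq:weighted_lsqr_soln} gives
\begin{equation*}
(\mathbf{y}^*_{\text{LSQR}})_m = \widehat{\mathbf{y}}_m\left(1 - a_{\ell,m}\frac{P-Q}{P+Q}\right).
\end{equation*}
Because $\widehat{\mathbf{y}}>\mathbf{0}$, we have $P+Q>0$, so this is well defined.

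Nonnegativity now follows from two cases. If $a_{\ell,m}=1$, the bracket equals $2Q/(P+Q)$. If $a_{\ell,m}=-1$, it equals $2P/(P+Q)$. In both cases the bracket is nonnegative, since $P,Q\ge 0$, and $\widehat{\mathbf{y}}_m>0$, so $(\mathbf{y}^*_{\text{LSQR}})_m\ge 0$.

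As a by-product, the solution rescales the positive-side items by $2Q/(P+Q)$ and the negative-side items by $2P/(P+Q)$. Both groups are moved multiplicatively toward the common total $2PQ/(P+Q)$, which is the share-based interpretation. The entries are in fact strictly positive whenever both $S_\ell^+$ and $S_\ell^-$ are nonempty, which is the situation of Theorem \ref{thm:nonneg}.

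Two points need care rather than effort. First, a column of $\mathbf{A}$ may be entirely zero. For such $m$ the fitted term vanishes and $(\mathbf{y}^*_{\text{LSQR}})_m=\widehat{\mathbf{y}}_m>0$, so Theorem \ref{thm:weighted_lsqr_entries} should be read as covering only columns with a nonzero entry, with this trivial case handled separately. Second, $\mathbf{A}\mathbf{W}^{-1}\mathbf{A}^\top$ must be invertible, which means no row of $\mathbf{A}$ is zero. This is implicit in the definition of an aggregation matrix.
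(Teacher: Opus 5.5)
Your proposal is correct and is essentially the paper's proof: substitute $1/w_m=\widehat{\mathbf{y}}_m$ into Eq.~\eqref{eq:weighted_lsqr_soln}, and observe that the bracket $(1-a_{\ell,m})P+(1+a_{\ell,m})Q$ (divided by $P+Q$) equals $2Q$ or $2P$ according to the sign of $a_{\ell,m}$. Your zero-column caveat is harmless but not needed, since the paper defines disjoint row supports as every column having exactly one nonzero entry.
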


\begin{proof}
    As in Theorem \ref{thm:weighted_lsqr_entries}, for any $1 \leq m \leq N$ let $1 \leq \ell \leq K$ be the unique row index where the $m$th column of $\mathbf{A}$ is nonzero.  Plugging $1/w_m = \mathbf{\widehat{y}}_m$ into Eq.~\eqref{eq:weighted_lsqr_soln} gives 
    \begin{align*}
    (\mathbf{y}^*_{\text{LSQR}})_m &= \mathbf{\widehat{y}}_m - \frac{A_{\ell}\mathbf{\widehat{y}}}{A_{\ell} \mathbf{W}^{-1} A_\ell^{\top}} a_{\ell,m}\mathbf{\widehat{y}}_m \\
    &= \frac{\mathbf{\widehat{y}}_m}{A_\ell \mathbf{W}^{-1} A_\ell^{\top}}\left(A_\ell \mathbf{W}^{-1} A_\ell^{\top} - a_{\ell,m} A_\ell \mathbf{\widehat{y}}\right).
    \end{align*}
    Because $\mathbf{W}^{-1}$ is diagonal with nonnegative entries it is positive semidefinite, meaning $A_\ell \mathbf{W}^{-1} A_\ell^{\top} \geq 0$. Thus it suffices to prove 
    $$
    A_\ell \mathbf{W}^{-1} A_\ell^{\top} - a_{\ell,m} A_\ell \mathbf{\widehat{y}} \geq 0.
    $$
    We compute
    \begin{align*}
    A_\ell \mathbf{W}^{-1} A_\ell^{\top} - a_{\ell,m} A_\ell \mathbf{\widehat{y}} 
    &= \sum_{n=1}^Na_{\ell,n}^2\mathbf{\widehat{y}}_n - a_{\ell,m} \sum_{n=1}^Na_{\ell,n}\mathbf{\widehat{y}}_n \\
    &= (1-a_{\ell,m})\sum_{n\, : \, a_{\ell,n} = 1} \mathbf{\widehat{y}}_n  \\
    &+ (1 + a_{\ell,m})\sum_{n\, : \, a_{\ell,n} = -1} \mathbf{\widehat{y}}_n.
    \end{align*}
    The resulting quantity is nonnegative because $a_{\ell,m} = \pm 1$.
\end{proof}

\begin{corollary}
    If $\mathbf{A}$ consists of a single row and $w_n = 1/\mathbf{\widehat{y}}_n$ for all $1 \leq n \leq N$, then $\mathbf{y}^*_{\text{LSQR}} \geq \mathbf{0}$.
\end{corollary}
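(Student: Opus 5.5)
The plan is to reduce this to Corollary~\ref{cor:disjoint_row_supports}. The only thing to check is that a single-row matrix $\mathbf{A}$ has rows with disjoint supports. There is one subtlety. The paper defines ``rows with disjoint supports'' as requiring exactly one nonzero entry in each column. A single row $A_1\in\{0,1,-1\}^{1\times N}$ may contain zero entries, and then the condition fails literally.

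I would handle this in two steps.

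First, suppose every entry of $A_1$ is nonzero. Then each column has exactly one nonzero entry, namely the one in row $1$. The hypothesis of Corollary~\ref{cor:disjoint_row_supports} holds verbatim, and that corollary gives $\mathbf{y}^*_{\text{LSQR}}\geq\mathbf{0}$ directly.

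Second, suppose some columns $m$ have $a_{1,m}=0$. I would go back to the closed form in Eq.~\eqref{eq:lsqr}. Since $\mathbf{A}\mathbf{W}^{-1}\mathbf{A}^\top$ is the $1\times1$ scalar $A_1\mathbf{W}^{-1}A_1^\top$, the correction term is
\[
\frac{A_1\widehat{\mathbf{y}}}{A_1\mathbf{W}^{-1}A_1^\top}\,\mathbf{W}^{-1}A_1^\top .
\]
Its $m$th entry is proportional to $a_{1,m}/w_m$. So for any column with $a_{1,m}=0$ we get $(\mathbf{y}^*_{\text{LSQR}})_m=\widehat{\mathbf{y}}_m>0$. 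For the columns with $a_{1,m}\neq 0$, the computation in the proof of Theorem~\ref{thm:weighted_lsqr_entries} goes through unchanged, with $\ell=1$. The nonnegativity argument of Corollary~\ref{cor:disjoint_row_supports} then applies verbatim: with $w_n=1/\widehat{\mathbf{y}}_n$, the sign of the entry is the sign of
\[
(1-a_{1,m})\sum_{a_{1,n}=1}\widehat{\mathbf{y}}_n+(1+a_{1,m})\sum_{a_{1,n}=-1}\widehat{\mathbf{y}}_n\geq 0 .
\]

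The only real obstacle is this bookkeeping for zero columns, together with the implicit assumption that $A_1\neq\mathbf{0}$. That assumption is needed so that $A_1\mathbf{W}^{-1}A_1^\top>0$ and the inverse in Eq.~\eqref{eq:lsqr} exists. In the case $A_1=\mathbf{0}$, the constraint is vacuous and the solution is $\widehat{\mathbf{y}}$ itself, which is trivially nonnegative. Equivalently, one could simply note that all-zero columns contribute nothing to $A_1\widehat{\mathbf{y}}$ or to $A_1\mathbf{W}^{-1}A_1^\top$. Deleting them reduces the problem to the first case and leaves those coordinates fixed at $\widehat{\mathbf{y}}_m$.
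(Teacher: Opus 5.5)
Your proposal is correct and takes the same route as the paper. The paper's proof is the one-line observation that a single-row matrix vacuously has disjoint row supports, so Corollary~\ref{cor:disjoint_row_supports} applies. Your separate treatment of zero columns (where the $m$th entry of the correction term, proportional to $a_{1,m}/w_m$, vanishes and leaves $\widehat{\mathbf{y}}_m>0$) and of the nondegeneracy condition $A_1\neq\mathbf{0}$ makes explicit what ``vacuously'' glosses over. Under the paper's literal definition of one nonzero entry per column, the reduction is not in fact vacuous, so that extra care is warranted.
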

\begin{proof}
    A matrix with a single row vacuously has disjoint row supports.
\end{proof}

\section{Proofs with top-heavy and bottom-heavy weighting}\label{appendix:weighting_proofs}
If $\hier$ is a tree-based hierarchy, each of its constraints can be written in the form
\begin{equation} \label{eq:constraint_rep}
\y_{p_k} - \sum_{c \in C_k} \y_{c} = 0,
\end{equation}
for $1 \leq k \leq K$. Here $p_k$ is a \emph{parent} item and $C_k$ is its set of \emph{child} items. By Theorem \ref{thm:representation} we can assume the aggregation matrix of any tree-based hierarchy is written according to this \emph{canonical representation}.

The \emph{depth} of item $n$, denoted $D(n)$ is defined recursively so that a top-level item with no parent has depth $0$ and the depth of any other item is $1$ more than that of its (necessarily unique) parent. Conversely, the \emph{height} of item $n$, denoted $H(n)$ is defined by $H(n) = D_{\max} - D(n)$, where $D_{\max}$ is the maximum depth of any item in $\hier$. In particular this implies that $H(c) = H(p) - 1$ if $c$ is a child of $p$. If $\A$ is the aggregation matrix of a tree-based hierarchy, we can assume wlog that the rows of $\A$ are ordered so that $H(p_1) \geq H(p_2) \geq \cdots \geq H(p_K)$.

\begin{example}
Figure \ref{fig:tree_example} shows an example of a strict tree-based hierarchy $\hier$ and its canonical aggregation matrix. Each level of $\hier$ is labeled with the depth ($D$) and height ($H$) of all items on that level. The items in this example are $\{a,\ldots,g\}$ and the columns of $\A$ are indexed in that same order. For example $p_2 = b$ and $C_2 = \{d,e\}$, and the second constraint is $\y_b - \y_d - \y_e = 0$. 
\end{example}

\newcommand{\mytikzfig}{%
\begin{tikzpicture}[
    level/.style={sibling distance=40mm/##1},
    level distance=15mm,
    scale = .8,
]
\node (root) {$a$}
    child {node (y2) {$b$}
        child {node (y4) {$d$}}
        child {node (y5) {$e$}}
    }
    child {node (y3) {$c$}
        child {node (y6) {$f$}}
        child {node (y7) {$g$}}
    };

\draw[dotted] (-3mm,0) -- (-4,0) node[anchor = east] {$\substack{D=0 \\ H=2}$};
\draw[dotted] (-23mm,-15mm) -- (-4,-15mm) node[anchor = east] {$\substack{D=1 \\ H=1}$};
\draw[dotted] (-33mm,-30mm) -- (-4,-30mm) node[anchor = east] {$\substack{D=2 \\ H=0}$};
\end{tikzpicture}
}

\begin{figure}[htb]
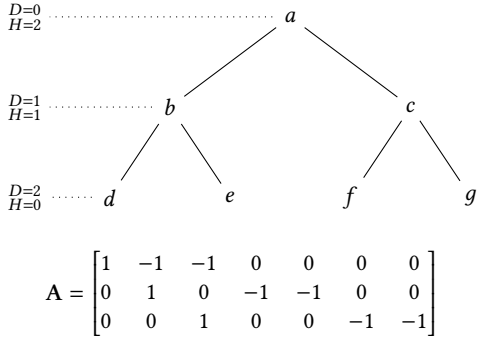
 
\centering
\makeatletter
\if@twocolumn
    \mytikzfig
    
\ \\

    $
    \A = \begin{bmatrix}
    1 & -1 & -1 & 0 & 0 & 0 & 0 \\ 
    0 & 1 & 0 & -1 & -1 & 0 & 0 \\
    0 & 0 & 1 & 0 & 0 & -1 & -1 \\ 
    \end{bmatrix}
    $
\else
    \begin{minipage}{0.45\linewidth}
        \mytikzfig
    \end{minipage}
    \begin{minipage}{0.45\linewidth}
        \vspace{-15mm}$
    \A = \begin{bmatrix}
    1 & -1 & -1 & 0 & 0 & 0 & 0 \\ 
    0 & 1 & 0 & -1 & -1 & 0 & 0 \\
    0 & 0 & 1 & 0 & 0 & -1 & -1 \\ 
    \end{bmatrix}
    $
    \end{minipage}
\fi
\makeatother

    \caption{A strict tree-based hierarchy and its canonical aggregation matrix.}
    \label{fig:tree_example}
\end{figure}

\begin{proposition}
A tree-based hierarchy $\hier$ can be represented by an aggregation matrix $\A$ of full rank in which each column has at most one $-1$ entry. If in addition, $\hier$ is strict, then each column of $\A$ has at most one $+1$ entry as well. 
\end{proposition}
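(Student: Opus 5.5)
We will build $\A$ directly in the canonical form of Eq.~\eqref{eq:constraint_rep}: one row for each aggregation relation $\y_{p_k} - \sum_{c \in C_k} \y_c = 0$ that $\hier$ imposes, with $+1$ in column $p_k$ and $-1$ in each column $c \in C_k$. The proof then has three steps: construct the matrix, read off the column sign patterns, and choose rows so that the matrix has full rank.

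\emph{Construction and $-1$ entries.} Because $\hier$ is tree-based, every item has at most one parent. We need to make sure each child appears in only one constraint. In a tree-based hierarchy a parent may be the aggregate of several alternative child sets. An item $c$ can still belong to only one child set, since every set $C_k$ containing $c$ consists of children of $c$'s unique parent. We will read the tree-based hypothesis as saying that the child sets form a forest of parent--child edges, so each item lies in at most one $C_k$. Under this reading, column $c$ has exactly one $-1$, in the row $k$ with $c \in C_k$, or none if $c$ is a top-level item. Column $n$ also has $+1$ entries, one in each row $k$ with $p_k = n$.

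\emph{Strict case.} If $\hier$ is strict, each parent is the aggregate of exactly one set of children. So $p_k = p_{k'}$ forces $k = k'$, and each column has at most one $+1$.

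\emph{Full rank.} This is the step we expect to be the main obstacle. We would first drop redundant rows, since they do not change the constraint set $\{\y : \A\y = \zeros\}$. We then show that the remaining rows are linearly independent by a triangularity argument. Each row $k$ is associated with its child set $C_k$, and by the forest property these sets are pairwise disjoint and nonempty. Pick a representative $c_k \in C_k$. Column $c_k$ has its only $-1$ in row $k$. Column $c_k$ may also carry $+1$ entries in rows $k'$ with $p_{k'} = c_k$, and each such row has $H(p_{k'}) = H(p_k) - 1$.

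Order the rows by decreasing $H(p_k)$, as the paper allows. Restricting $\A$ to the columns $c_1, \ldots, c_K$ then gives a block triangular matrix. Within a height level, the $-1$ entries sit on the diagonal, and the same-height block is $-I$, because a row $k'$ with $p_{k'} = c_k$ lies strictly below row $k$. The $K \times K$ submatrix is therefore triangular with nonzero diagonal, so $\A$ has rank $K$.

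The concern in this step is that several alternative child sets for the same parent could produce dependent rows. They do not, because the sets $C_k$ are disjoint and so each row keeps its own pivot column. If the intended notion instead allows an item to sit in two child sets of the same parent, we would replace the rows for that parent with a basis of their span. One way to do this is to keep one row $\y_p - \sum_{C_1}\y_c$ and the differences $\sum_{C_1}\y_c - \sum_{C_j}\y_c$, then row-reduce. This step needs more care and may be where the argument for the $-1$ claim fails without the forest assumption.
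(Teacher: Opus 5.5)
Your proof is correct. For the full-rank part it takes a genuinely different route from the paper.

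\textbf{The paper's argument.} The paper restricts $\mathbf{A}$ to the \emph{parent} columns $p_1,\dots,p_K$. With rows ordered by decreasing $H(p_k)$, this block is upper triangular with $+1$ on the diagonal, because every off-diagonal entry is a $-1$ at some $(i,j)$ with $p_j \in C_i$, which forces $H(p_j)=H(p_i)-1$. That argument tacitly requires the $p_k$ to be distinct, so it really covers only the strict case. Suppose a parent $p$ is the aggregate of two child sets, giving rows $(1,-1,-1,0,0)$ and $(1,0,0,-1,-1)$. Then the parent columns collapse to the single column $(1,1)^\top$, and no square triangular block exists.

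\textbf{Your argument.} You pivot on one representative child $c_k \in C_k$ per row, which gives a lower-triangular $K\times K$ block with $-1$ on the diagonal. This uses only the pairwise disjointness of the child sets, which is exactly the ``at most one $-1$ per column'' property. So it proves full rank for non-strict tree-based hierarchies too, which is what the first sentence of the proposition asserts; in the example above, the columns $c_1,c_3$ give $-I_2$. The two sign-pattern claims are handled the same way in both proofs.

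\textbf{Two small remarks.}
\begin{enumerate}
\item Your sentence ``an item $c$ can still belong to only one child set, since every $C_k$ containing $c$ consists of children of $c$'s unique parent'' does not prove that $C_k$ is unique. It only shows that all such $C_k$ share the same parent. The forest property is therefore an interpretive assumption, as you then acknowledge. The paper makes the same tacit assumption: its $-1$ claim is just a restatement of it, and the case analysis in the proof of Theorem~\ref{thm:height_lambdas_converge_zero} never considers an item being a child on two rows.
\item Under that reading, the preliminary step of dropping redundant rows and the closing hedge about overlapping child sets are unnecessary, since you show that every row of the canonical matrix is already linearly independent.
\end{enumerate}
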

\begin{proof}
Let $P = \{p_1,\ldots,p_K\}$ be the parent nodes in $\hier$ and consider the restriction of $\A$ to the columns indexed by $P$. Assuming the rows are ordered so that $H(p_1) \geq H(p_2) \geq \cdots \geq H(p_K)$, we can permute the columns so they are indexed correspondingly in order as $p_1,\ldots,p_K$. The resulting restriction has $+1$'s on its diagonal and is upper-triangular because any child comes after its parent in order. Therefore this restriction is full-rank and hence so too is $\A$. 

The fact that each column has at most one $-1$ entry corresponds to the fact that each item has at most one parent. The fact that each column has at most one $+1$ entry when $\hier$ is strict corresponds to the fact that each item can only be a parent in a single aggregation constraint. 
\end{proof}

\subsection{Notation and definitions} \label{section:notation}

In order to solve the optimization problem in Eq.~\eqref{eqn:basic}, we consider the Lagrangian relaxation

\begin{align}
\label{eq:lagr_def} \lagr &= \frac{1}{2} || \y - \yhat||^2_{\W} + \sum_{k=1}^K \lambda_k A_k\yhat \\
\nonumber &= \frac{1}{2} \sum_{n=1}^N w_n(\y_n-\yhat_n)^2 + \sum_{k=1}^K \lambda_k A_k\yhat.
\end{align}

Without the nonnegativity constraint $\y \geq \zeros$, the solution to the optimization problem in Eq.~\eqref{eqn:basic} occurs at points where $\nabla \lagr(M)= \zeros$. Because the loss function is convex, there is a unique point where $\nabla \lagr(M)= \zeros$ and it must be a global minimum.  Dropping the nonnegativity constraint will not be a problem here as our goal is to study cases where $\y$ corresponds to a share-based disaggregation or bottom-up aggregation, and in both of those cases nonnegativity will automatically be satisfied. 

\begin{lemma} \label{lemma:lagr_soln}
For any $\A$ and $\W$, the solution to $\nabla \lagr = \zeros$ from Eq.~ \eqref{eq:lagr_def} is 
$$
\begin{bmatrix} \y \\ \blam \end{bmatrix} := \begin{bmatrix} \yhat-\W\inv\A^{\top}(\A\W\inv\A^{\top})\inv \A\yhat \\ (\A\W\inv\A^{\top})\inv \A\yhat\end{bmatrix}
$$
\end{lemma}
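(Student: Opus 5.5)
The plan is to write down the first-order (KKT) conditions of the equality-constrained problem and solve them directly. One caveat first: the constraint term in Eq.~\eqref{eq:lagr_def} must be read as $\sum_k \lambda_k A_k\y$, i.e.\ $\blam^\top\A\y$. As literally written, $\sum_k\lambda_k A_k\yhat$ does not depend on $\y$. With that reading, $\lagr(\y,\blam) = \frac12(\y-\yhat)^\top\W(\y-\yhat) + \blam^\top\A\y$, and setting both blocks of $\nabla\lagr$ to zero gives a linear system in $(\y,\blam)$.

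The key steps, in order, are as follows.

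\emph{Stationarity in $\y$.} Differentiating gives $\W(\y-\yhat) + \A^\top\blam = \zeros$. Because $\W$ is diagonal with positive entries it is invertible, so
\[
\y = \yhat - \W\inv\A^\top\blam .
\]

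\emph{Stationarity in $\blam$.} This is just feasibility, $\A\y=\zeros$. Substituting the expression for $\y$ yields
\[
\A\yhat - \A\W\inv\A^\top\blam = \zeros, \qquad\text{i.e.}\qquad (\A\W\inv\A^\top)\blam = \A\yhat .
\]

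\emph{Solve and back-substitute.} Inverting the matrix gives $\blam = (\A\W\inv\A^\top)\inv\A\yhat$. Plugging this into the formula for $\y$ gives exactly the stated vector, which coincides with Eq.~\eqref{eq:lsqr}.

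The only step that is not pure algebra is justifying that $\A\W\inv\A^\top$ is invertible, and this is where I expect the main obstacle.

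\emph{Invertibility.} For any $\mathbf{x}$ we have $\mathbf{x}^\top\A\W\inv\A^\top\mathbf{x} = \|\W^{-1/2}\A^\top\mathbf{x}\|^2$. Since $\W^{-1/2}$ is nonsingular, this vanishes only when $\A^\top\mathbf{x}=\zeros$. So the matrix is positive definite precisely when $\A$ has full row rank. For tree-based hierarchies this is supplied by the preceding proposition, which shows the canonical $\A$ is full rank. For a general $\A$, I would state full row rank as a standing assumption, which is implicit in Eq.~\eqref{eq:lsqr} itself. Alternatively, one can discard redundant rows without changing the feasible set.

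\emph{Uniqueness.} The objective is strictly convex because $\W\succ 0$, and the constraints are affine. Hence the KKT conditions are necessary and sufficient, and the $\y$-component found above is the unique minimizer. Under full row rank, $\blam$ is unique as well, so the displayed pair is \emph{the} solution of $\nabla\lagr=\zeros$.
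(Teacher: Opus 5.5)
Your proposal is correct and follows the paper's own argument. The paper writes $\nabla\lagr=\zeros$ as the KKT block system with matrix blocks $\W,\A^{\top},\A,\zeros$ and right-hand side $(\W\yhat,\zeros)$, then appeals to block row reduction; that is exactly your step of solving for $\y$, substituting into $\A\y=\zeros$, and back-substituting. Your reading of the constraint term as $\blam^{\top}\A\y$ (the $A_k\yhat$ in Eq.~\eqref{eq:lagr_def} is a typo) is right, and so is your full-row-rank condition for invertibility of $\A\W\inv\A^{\top}$; the paper leaves both implicit despite saying ``for any $\A$''.
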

\begin{proof}
Note that $\nabla \lagr = \zeros$ is equivalent to solving 
$$
\begin{bmatrix} \W & \A^{\top} \\ \A & \zeros \end{bmatrix}
\begin{bmatrix} \y \\ \blam \end{bmatrix} = 
\begin{bmatrix} \W\yhat \\ \zeros \end{bmatrix}
$$
and perform block row reduction. 
\end{proof}

We state the next two lemmas without proof as they are simple but useful algebraic manipulations. 
\begin{lemma} \label{lemma:lagr_zero}
Let $\A$ be the canonical aggregation matrix of a tree-based hierarchy $\hier$, and let $\W$ be a diagonal weight matrix. For any $1 \leq n \leq N$, let $P_n \subseteq \{1,\ldots,K\}$ be the (possibly empty) set of rows of $\A$ where item $n$ is a parent and let $C_n \subseteq \{1,\ldots,N\}$ be the (possibly empty) set of rows where item $n$ is a child. Then $\frac{\partial\lagr}{\partial\y_n} = 0$ if and only if 
\begin{equation}\label{eq:partial_lagr_zero}
\y_n = \yhat_n - \frac{1}{w_n}\sum_{k \in P_n} \lambda_k + \frac{1}{w_n}\sum_{j \in C_n} \lambda_j. 
\end{equation}
\end{lemma}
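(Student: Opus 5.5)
The plan is a direct computation of the partial derivative, using the column structure of the canonical aggregation matrix. First, a reading of the Lagrangian in Eq.~\eqref{eq:lagr_def}: the multiplier terms must involve the decision variable, $\sum_{k} \lambda_k A_k\y$. As written, $\lambda_k A_k\yhat$ is constant in $\y$ and would make the statement false. I will work with
$$
\lagr = \frac{1}{2}\sum_{n=1}^N w_n(\y_n-\yhat_n)^2 + \sum_{k=1}^K \lambda_k \sum_{m=1}^N a_{k,m}\y_m ,
$$
which is also the form implicitly used in Lemma~\ref{lemma:lagr_soln}. Indeed, the KKT system there has $\A^\top\blam$ in the first block row.

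The first step is to differentiate term by term with respect to $\y_n$. The quadratic part contributes $w_n(\y_n-\yhat_n)$, since only the $n$th summand depends on $\y_n$. The linear part contributes $\sum_{k=1}^K \lambda_k a_{k,n}$, i.e.\ the $n$th entry of $\A^\top\blam$.

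The second step is to evaluate the column sum $\sum_k \lambda_k a_{k,n}$ using the canonical representation Eq.~\eqref{eq:constraint_rep}. Row $k$ has entry $+1$ in column $p_k$, entry $-1$ in each column $c\in C_k$, and zeros elsewhere. An item is never its own child, so $p_k\notin C_k$, and these entries do not collide within a row. Hence, for item $n$, the nonzero entries of column $n$ are $+1$ exactly in the rows of $P_n$ (where $n$ is the parent) and $-1$ exactly in the rows of $C_n$ (where $n$ is a child). Therefore
$$
\frac{\partial\lagr}{\partial\y_n} = w_n(\y_n-\yhat_n) + \sum_{k\in P_n}\lambda_k - \sum_{j\in C_n}\lambda_j .
$$
For a tree-based hierarchy, $C_n$ has at most one element, by the preceding Proposition. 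This is not needed for the identity.

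The final step is to set this expression to zero and divide by $w_n>0$, which gives Eq.~\eqref{eq:partial_lagr_zero}. Every step is reversible, so the equivalence holds in both directions.

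There is no real obstacle beyond bookkeeping. The only points needing care are the typo in the Lagrangian noted above, and the sign convention: parent entries are $+1$ and child entries are $-1$, which is why $P_n$ enters with a minus sign and $C_n$ with a plus sign after solving for $\y_n$.
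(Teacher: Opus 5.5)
Your proof is correct and is exactly the ``simple algebraic manipulation'' the paper has in mind; the paper states this lemma without proof. The steps are: differentiate, read off column $n$ of the canonical $\A$ ($+1$ on the rows in $P_n$, $-1$ on the rows in $C_n$), and divide by $w_n>0$. Your correction of the multiplier term in Eq.~\eqref{eq:lagr_def} to $\sum_k \lambda_k A_k\y$ is also right, and it is the sign convention consistent with the KKT system in Lemma~\ref{lemma:lagr_soln}.
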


The following result is unmotivated for now but will be useful in later proofs. 

\begin{lemma}
    Let $\bL = \diag(\ell_1,\ldots,\ell_K)$ be a diagonal matrix of weights. For any $\A$ and diagonal $\W$, extract $\blam$ from the solution to $\nabla \lagr = \zeros$ and consider 
    $$
    \bL\inv \blam  = \bL\inv (\A\W\inv\A^{\top})\inv\A\yhat = (\A\W\inv\A^{\top}\bL)\inv\A\yhat.
    $$
    Then the entries of the matrix $\Q:= \A\W\inv\A^{\top}\bL$ are
\begin{equation}\label{eq:Q_structure}
\Q_{i,j} = (\A\W\inv)_{i,:}(\A^{\top}\bL)_{:,j} = \sum_{n=1}^N a_{i,n}a_{j,n}\frac{\ell_j}{w_n}.
\end{equation}
\end{lemma}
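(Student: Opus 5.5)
The identity is a direct computation of the $(i,j)$ entry of a product of three matrices, so I will expand it using only that $\W\inv$ and $\bL$ are diagonal.

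\emph{The middle expression for $\bL\inv\blam$.} By Lemma \ref{lemma:lagr_soln}, $\blam = (\A\W\inv\A^{\top})\inv\A\yhat$. Since $\bL$ is diagonal with nonzero entries, it is invertible, and for invertible matrices $X,Y$ we have $(XY)\inv = Y\inv X\inv$. Taking $X = \A\W\inv\A^{\top}$ (invertible because $\A$ has full rank by the preceding proposition and $\W\inv$ is positive definite) and $Y = \bL$ gives
\[
\bL\inv(\A\W\inv\A^{\top})\inv = (\A\W\inv\A^{\top}\bL)\inv .
\]
This yields $\bL\inv\blam = (\A\W\inv\A^{\top}\bL)\inv\A\yhat$.

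\emph{The entries of $\Q$.} I will group the product as $\Q = (\A\W\inv)(\A^{\top}\bL)$. Then $\Q_{i,j}$ is the inner product of row $i$ of $\A\W\inv$ with column $j$ of $\A^{\top}\bL$.

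1. Right-multiplying by the diagonal matrix $\W\inv$ scales column $n$ by $1/w_n$, so $(\A\W\inv)_{i,n} = a_{i,n}/w_n$.
2. Right-multiplying by the diagonal matrix $\bL$ scales column $j$ by $\ell_j$, so $(\A^{\top}\bL)_{n,j} = a_{j,n}\ell_j$.
3. Summing the products over $n$ gives
\[
\Q_{i,j} = \sum_{n=1}^N a_{i,n}a_{j,n}\frac{\ell_j}{w_n},
\]
which is Eq.~\eqref{eq:Q_structure}.

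There is no real obstacle here. The only point needing care is keeping the order of the inverse product straight, so that $\bL$ ends up on the right of $\A\W\inv\A^{\top}$ and the scaling in $\Q_{i,j}$ is by $\ell_j$ (the column index), not $\ell_i$.
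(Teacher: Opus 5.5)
Your proposal is correct and follows the same route as the paper, which states this lemma without proof beyond the grouping $\Q_{i,j} = (\A\W\inv)_{i,:}(\A^{\top}\bL)_{:,j}$ that you carry out entrywise, together with the identity $(XY)\inv = Y\inv X\inv$. One small remark: because the lemma is stated for any $\A$, the invertibility of $\A\W\inv\A^{\top}$ is the standing assumption already implicit in Lemma~\ref{lemma:lagr_soln} (full row rank of $\A$, with $w_n>0$), not a consequence of the tree-based proposition; the entry formula for $\Q$ needs no invertibility at all.
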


\subsection{Share-based disaggregation and top-heavy weighting}

For this section we will assume $\A$ comes from a tree-based hierarchy $\hier$. Recall the notation that $p_k$ and $C_k$ are, respectively, the parent item and its children imposed by the $k$th constraint and that $H(n)$ is the height of the $n$th item. 

For a constant $M$, let $\W$ be the diagonal matrix with $w_n = M^{H(n)} / \yhat_n$, meaning the weights are inversely proportional to the value of the initial prediction $\yhat_n$ and scale exponentially in proportion to the height. We call $\W$ the \emph{top-heavy} weighting of $\hier$. Our goal is to prove that as $M \rightarrow \infty$, (1) the least squares solution  $\y^*_{\lsqr}$ to Eq.~\eqref{eq:lsqr} converges, and (2) in the limit the top-level items agree with $\yhat$ while lower-level items are found through share-based disaggregation. A tree-based hierarchy is necessary for share-based disaggregation to be well-defined without adding further constraints. 

Throughout this section, for a fixed $M$ we will let $\y(M) = \y = (\y_n)_{n=1}^N$ and $\blam(M) = \blam = (\lambda_k)_{k=1}^K$ be the solutions to $\nabla \lagr = \zeros$ from Lemma \ref{lemma:lagr_soln}.

\begin{theorem} \label{thm:height_lambdas_converge_zero}
Let $\hier$ be a tree-based hierarchy with canonical aggregation matrix $\A$ and top-heavy weight matrix $\W$. For all $1 \leq k \leq K$, 
\begin{align*}
    \lim\limits_{M \rightarrow \infty} \frac{\lambda_k}{M^{H(p_k)}} = 0.
\end{align*} 
\end{theorem}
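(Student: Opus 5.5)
The plan is to rescale the multipliers and apply the last lemma of Section \ref{section:notation}. Take $\bL=\diag(\ell_1,\ldots,\ell_K)$ with $\ell_k=M^{H(p_k)}$. By Lemma \ref{lemma:lagr_soln} and that lemma,
$$
\bL\inv\blam=\Q\inv\A\yhat, \qquad \Q=\A\W\inv\A^{\top}\bL .
$$
The vector $\A\yhat$ does not depend on $M$. So it suffices to show that $\Q\inv\to\zeros$ as $M\to\infty$, which I will get from $\Q = M(\mathbf{T}+O(1/M))$ for a fixed invertible matrix $\mathbf{T}$.

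To find $\mathbf{T}$, substitute $1/w_n=\yhat_n M^{-H(n)}$ into Eq.~\eqref{eq:Q_structure}:
$$
\Q_{i,j}=\sum_n a_{i,n}a_{j,n}\,\yhat_n\, M^{H(p_j)-H(n)} .
$$
A summand is nonzero only when item $n$ lies in the support of both rows $i$ and $j$. By Eq.~\eqref{eq:constraint_rep}, such an $n$ is either the parent $p$ or one of the children of each constraint involved, and children have height $H(p)-1$. I will go through the cases.
\begin{enumerate}
\item Diagonal, $i=j$. The parent $n=p_j$ contributes $\yhat_{p_j}M^0$. Each child $c\in C_j$ contributes $\yhat_c M^{1}$. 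Hence $\Q_{j,j}=M\sum_{c\in C_j}\yhat_c+O(1)$.
\item Off-diagonal, shared parent, $n=p_i=p_j$ (possible only in a non-strict hierarchy). The exponent is $0$, so the entry is $O(1)$.
\item Off-diagonal, $n=p_j\in C_i$. The exponent is again $0$, so the entry is $O(1)$.
\item Off-diagonal, $n=p_i\in C_j$. The exponent is $H(p_j)-H(p_i)=1$, and the sign is $a_{i,n}a_{j,n}=-1$, so the entry is $-\yhat_{p_i}M+O(1)$.
\end{enumerate}
No other configuration occurs, since each item has at most one parent. Two children of different constraints cannot coincide, and a shared child between two rows forces a shared parent.

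It follows that $\Q/M\to \mathbf{T}:=\mathbf{D}+\mathbf{N}$, where
\begin{align*}
\mathbf{D}&=\diag\Bigl(\sum_{c\in C_j}\yhat_c\Bigr), \\
\mathbf{N}_{i,j}&=-\yhat_{p_i}\quad\text{when } p_i\in C_j, \text{ and } 0 \text{ otherwise.}
\end{align*}
If $p_i\in C_j$, then $H(p_j)=H(p_i)+1>H(p_i)$. Because the rows are ordered with $H(p_1)\ge\cdots\ge H(p_K)$, this forces $j<i$. Hence $\mathbf{N}$ is strictly lower triangular. Each $C_j$ is nonempty and $\yhat>\zeros$, so $\mathbf{D}$ has strictly positive diagonal. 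Therefore $\mathbf{T}$ is lower triangular with positive diagonal, hence invertible.

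By continuity of inversion, $(\Q/M)\inv\to\mathbf{T}\inv$. This gives
$$
\bL\inv\blam=\frac1M\,(\Q/M)\inv\A\yhat \;\longrightarrow\; 0\cdot\mathbf{T}\inv\A\yhat=\zeros ,
$$
and the $k$th entry of this vector is $\lambda_k/M^{H(p_k)}$, as claimed.

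The main obstacle is the case analysis of $\Q_{i,j}$. It must show that no entry grows faster than $M$, and that the order-$M$ part is triangular under the height ordering. This relies on the tree assumption (at most one $-1$ per column) and on children sitting exactly one level below their parent. The non-strict case of a shared parent needs a separate check, but it only ever contributes $O(1)$ terms.
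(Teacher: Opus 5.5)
Your proposal is correct and follows essentially the same route as the paper: the same rescaling $\bL=\diag(M^{H(p_k)})$, the same case analysis of the entries $\Q_{i,j}$, and the same key observation that the order-$M$ part of $\Q$ is lower triangular with positive diagonal under the height ordering. The only difference is the last step, which is cosmetic. You conclude via $\Q/M\to\mathbf{T}$ and continuity of inversion, while the paper shows that $\det\Q$ has degree exactly $K$ and every cofactor has degree at most $K-1$; these are the same fact, since $\det\Q=M^K\det(\Q/M)$ and $\det(\Q/M)\to\det\mathbf{T}\neq 0$.
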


\begin{proof}
Fix $M$ and let $\bL$ be the $K \times K$ diagonal matrix with diagonal entries $M^{H(p_k)}$. Consider $\bL\inv \blam$, whose entries are the $\frac{\lambda_k}{M^{H(p_k)}}$ that we want to understand. Let $\Q = \A\W\inv\A^{\top}\bL$ so that $\bL\inv \blam = \Q\inv \A \y$ and consider the entries $\Q_{i,j}$. 

By Eq.~\eqref{eq:Q_structure}, $\Q$ has diagonal entries
\begin{align*}
    \Q_{i,i} &= \frac{\yhat_{p_i}}{M^{H(p_i)}}M^{H(p_i)} + \sum_{c \in C_i} \frac{\yhat_c}{M^{H(c)}}M^{H(p_i)} \\
    &= \yhat_{p_i} + M \sum_{c \in C_i} \yhat_c.
\end{align*}
On the other hand, if $i \neq j$, then $\Q_{i,j} = 0$ unless there exists some item $1 \leq n \leq N$ such that $a_{i,n}$ and $a_{j,n}$ are nonzero. Because $\A$ is tree-based, if such an $n$ exists, then it is unique and this can only happen if
\begin{enumerate}
    \item[Case 1:] item $n$ is a child on row $i$ and a parent on row $j$, meaning $i<j$ and $H(n) = H(p_i)-1 = H(p_j)$;
    \item[Case 2:] item $n$ is a parent on row $i$ and a child on row $j$, meaning $i>j$ and $H(n) = H(p_i) = H(p_j)-1$; or
    \item[Case 3:] item $n$ is a parent on both row $i$ and row $j$, meaning $H(n) = H(p_i) = H(p_j)$.
\end{enumerate} 
Therefore, if $\Q_{i,j} \neq 0$ then by Eq.~\eqref{eq:Q_structure}
\begin{align*}
    \Q_{i,j} &= -\frac{\yhat_n}{M^{H(n)}}\cdot M^{H(p_j)} 
    = \begin{cases} 
    -\yhat_n & \text{ in Case 1,} \\
    -M\yhat_n & \text{ in Case 2, and }\\
    -\yhat_n & \text{ in Case 3.}
    \end{cases}
\end{align*}

This tells us the diagonal entries of $\Q$ are  linear in $M$, the entries above the diagonal of $\Q$ are constant (Case 1 or Case 3), and the entries below the diagonal of $\Q$ are constant or linear in $M$ (Case 2 or Case 3). Now consider $\Q\inv$, which we will write as the product of $\frac{1}{\det(\Q)}$ with its cofactor matrix. 

First we claim $\det(\Q)$ is a polynomial in $M$ of degree $K$. Express $\det(\Q) = \sum_{\sigma} (-1)^{\text{sgn}(\sigma)} \prod_{i} \Q_{i,\sigma(i)}$, where the sum is over all permutations of $\{1,\ldots,K\}$. When $\sigma$ is the identity permutation, the product of the diagonal entries of $\Q$ contributes an $M^K$ term. For any non-identity permutation $\sigma$ there is some $i$ where $\sigma(i) > i$ and hence $\Q_{i,\sigma(i)}$ is a constant that is independent of $M$. Therefore each of the non-identity terms contribute polynomials of degree at most $K-1$, and the $M^K$ term from the diagonal cannot be cancelled.

Similarly each minor of $\Q$ is a polynomial of degree at most $K-1$ because it is the determinant of a $(K-1) \times (K-1)$ matrix whose entries are at most linear in $M$. Therefore every entry in the cofactor matrix is a polynomial of degree at most $K-1$. This means $\Q\inv \rightarrow \zeros_{K \times K}$ as $M \rightarrow \infty$.  

It follows that $\bL\inv\blam = \Q\inv \A\yhat \rightarrow \zeros$ as well because $\A\yhat$ is independent of $M$. 
\end{proof}

\begin{corollary} \label{cor:height_lambdas_converge}
Let $\hier$ be a tree-based hierarchy with canonical aggregation matrix $\A$ and top-heavy weight matrix $\W$. For all $1 \leq k \leq K$, 
$$
\lim\limits_{M \rightarrow \infty} \frac{\lambda_k}{M^{H(p_k)-1}}
$$
exists and is finite. 
\end{corollary}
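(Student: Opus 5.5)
The plan is to reuse the rescaling from the proof of Theorem~\ref{thm:height_lambdas_converge_zero}, now with the exponents shifted down by one. Let $\bL' := \frac{1}{M}\bL$, where $\bL$ is that proof's diagonal matrix with entries $M^{H(p_k)}$. Then $\bL'$ is diagonal with entries $M^{H(p_k)-1}$, and the entries of $\bL'^{-1}\blam$ are exactly the quantities $\lambda_k/M^{H(p_k)-1}$ in question. By Lemma~\ref{lemma:lagr_soln}, $\blam = (\A\W\inv\A^{\top})\inv\A\yhat$. Therefore
$$
\bL'^{-1}\blam = M\,\bL\inv\blam = M\,\Q\inv\A\yhat, \qquad \Q = \A\W\inv\A^{\top}\bL .
$$
Here $\Q$ is the same matrix analysed in the proof of Theorem~\ref{thm:height_lambdas_converge_zero}, and $\A\yhat$ does not depend on $M$. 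It therefore suffices to show that each entry of $M\,\Q\inv$ has a finite limit as $M\to\infty$.

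Write $\Q\inv = \frac{1}{\det \Q}\,\mathrm{adj}(\Q)$. By the structure computed in Eq.~\eqref{eq:Q_structure} and the case analysis in the previous proof, every entry of $\Q$ is a polynomial in $M$ of degree at most one. Hence every cofactor is a polynomial in $M$ of degree at most $K-1$, and every entry of $M\,\Q\inv$ is a ratio of polynomials in $M$. The numerator, $M$ times a cofactor, has degree at most $K$.

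The key step is to confirm that $\det\Q$ has degree exactly $K$, with nonzero leading coefficient. This was argued in the proof of Theorem~\ref{thm:height_lambdas_converge_zero}. For every non-identity permutation $\sigma$ there is an index $i$ with $\sigma(i)>i$, and the corresponding factor $\Q_{i,\sigma(i)}$ is constant in $M$ (Case 1 or Case 3). So only the diagonal product contributes to $M^K$. That coefficient is $\prod_{i=1}^K \sum_{c\in C_i}\yhat_c$, which is strictly positive because $\yhat>\zeros$ and every constraint has at least one child.

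Since numerator degree $\le K$ equals denominator degree, each entry of $M\,\Q\inv$ converges to a finite number, namely the ratio of the $M^K$ coefficients (zero if the numerator has lower degree). Multiplying by the constant vector $\A\yhat$ preserves this. Hence each $\lambda_k/M^{H(p_k)-1}$ has a finite limit.

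The only delicate point is the exact-degree claim for $\det\Q$, which is inherited from the previous proof. Beyond that, one only needs to note that the cofactor bound of degree $K-1$ is exactly what makes the extra factor $M$ harmless.
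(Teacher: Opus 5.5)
Your proof is correct, and it uses the same idea as the paper's: everything is a rational function of $M$, and the limit is settled by comparing degrees. The difference is where the degree gap comes from.

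The paper uses Theorem~\ref{thm:height_lambdas_converge_zero} only through its conclusion. It writes $\lambda_k = f(M)/g(M)$ and notes that $\lambda_k/M^{H(p_k)}\to 0$ forces $\deg f < \deg g + H(p_k)$. Since degrees are integers, this gives $\deg f \le \deg g + H(p_k)-1$, and the finite limit follows.

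You instead reopen that theorem's proof and use its explicit bounds: cofactors of degree at most $K-1$, and $\det\Q$ of degree exactly $K$ with positive leading coefficient. From these you show directly that $M\Q\inv$ converges entrywise.

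The paper's route is shorter and needs nothing about $\Q$ beyond the theorem's statement. Yours is self-contained and gives the limit explicitly as a ratio of $M^K$-coefficients. It also recovers Theorem~\ref{thm:height_lambdas_converge_zero} itself, since $\Q\inv = M^{-1}(M\Q\inv)\to\zeros$.
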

\begin{proof}
    Note that because $\blam = (\A\W\inv\A^{\top})\inv\A\yhat$, each of its entries is a rational function of $M$, meaning (away from a finite set of values of $M$ where it is undefined) each $\lambda_k$ can be written as the ratio of two polynomials in $M$. 
    
    Let $f$ and $g$ be polynomials such that $\lambda_k = \frac{f(M)}{g(M)}$. The fact that $\frac{\lambda_k}{M^{H(p_k)}} = \frac{f(M)}{M^{H(p_k)}g(M)}$ converges to $0$ as $M \rightarrow \infty$ means the degree of $f(M)$ is less than the degree of $M^{H(p_k)}g(M)$. Thus the degree of $f(M)$ is less than or equal to the degree of $M^{H(p_k)-1}g(M)$, meaning $$\lim\limits_{M \rightarrow \infty}\frac{\lambda_k}{M^{H(p_k)-1}} = \lim\limits_{M \rightarrow \infty}\frac{f(M)}{M^{H(p_k)-1}g(M)}$$ exists and is finite. 
\end{proof}

\begin{theorem} \label{thm:top_level_limits}
Let $\hier$ be a tree-based hierarchy with canonical aggregation matrix $\A$ and top-heavy weight matrix $\W$. If $p$ is a top-level parent node in $\hier$, then 
\begin{align*}
    \y^*_{p}:= \lim\limits_{M \rightarrow \infty} \y_{p}(M) = \yhat_{p}.
\end{align*}
\end{theorem}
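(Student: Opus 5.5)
We would prove this directly from the stationarity condition in Lemma \ref{lemma:lagr_zero}, combined with the growth bound on the multipliers from Theorem \ref{thm:height_lambdas_converge_zero}.

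\textbf{Stationarity at a top-level node.} Let $p$ be a top-level parent node. Then $D(p)=0$, so $H(p)=D_{\max}$, and $p$ is a child in no constraint, so $C_p=\emptyset$. Equation \eqref{eq:partial_lagr_zero} therefore reduces to
\begin{equation*}
\y_p(M) = \yhat_p - \frac{1}{w_p}\sum_{k \in P_p}\lambda_k = \yhat_p - \yhat_p \sum_{k\in P_p} \frac{\lambda_k}{M^{H(p)}},
\end{equation*}
where we substituted the top-heavy weight $w_p = M^{H(p)}/\yhat_p$. Here $P_p$ is the finite set of rows in which $p$ is the parent.

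\textbf{Passing to the limit.} Each $k\in P_p$ has $p_k=p$, so $M^{H(p)} = M^{H(p_k)}$. Theorem \ref{thm:height_lambdas_converge_zero} then gives $\lambda_k/M^{H(p_k)}\to 0$ as $M\to\infty$ for every such $k$. The sum is finite, so it tends to $0$, and hence $\y_p(M)\to\yhat_p$. This also shows that the limit exists, which is part of the claim.

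\textbf{Bookkeeping.} Two points need checking, and neither should be a real obstacle.
\begin{enumerate}
\item The sign conventions in Lemma \ref{lemma:lagr_zero} must match the Lagrangian, i.e.\ whether the multiplier enters as $\lambda_k A_k\y$. Even if the sign were flipped, the terms still vanish in the limit, so the conclusion is unaffected.
\item The solution must be defined for all sufficiently large $M$. This holds because $\A$ has full rank by the representation proposition, so $\A\W\inv\A^{\top}$ is invertible for every $M>0$.
\end{enumerate}

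The substantive work, namely controlling the growth of $\lambda_k$, is already contained in Theorem \ref{thm:height_lambdas_converge_zero}, so this result is essentially its corollary at the top level.
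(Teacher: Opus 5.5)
Your proposal is correct and matches the paper's proof. Both use Lemma~\ref{lemma:lagr_zero} at a parentless node to get $\y_p = \yhat_p - \yhat_p\sum_{k}\lambda_k/M^{H(p)}$, then apply Theorem~\ref{thm:height_lambdas_converge_zero} to each of the finitely many terms; your bookkeeping remarks (sign convention, invertibility of $\A\W\inv\A^{\top}$ from full rank) are accurate details the paper leaves implicit.
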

\begin{proof}
When $p$ is a top-level node, it does not have a parent. Let $I_p \subseteq \{1,\ldots,K\}$ be the set of constraints where $p$ is the parent node.  By Lemma \ref{lemma:lagr_zero}, 
\begin{align*}
\y_{p}&= \yhat_{p}-\frac{\yhat_{p}}{M^{H(p)}} \sum_{k \in I_p} \lambda_k.
\end{align*}
By Theorem \ref{thm:height_lambdas_converge_zero}, $\frac{\lambda_k}{M^{H(p)}} \rightarrow 0$ for each $k \in I_p$ and therefore $\y_p \rightarrow \yhat_p$ as $M \rightarrow \infty$.
\end{proof}

\begin{theorem} \label{thm:child_limits}
Let $\hier$ be a tree-based hierarchy with canonical aggregation matrix $\A$ and top-heavy weight matrix $\W$. Let $c$ be an item in $\hier$ that is not a top-level parent node and let $p_k$ be its parent. Then
\begin{equation} \label{eq:height_lambdas_limit}
\lim\limits_{M \rightarrow \infty} \frac{\lambda_k}{M^{H(p_k)-1}} = \lim\limits_{M \rightarrow \infty} \frac{\y_c}{\yhat_c}-1.
\end{equation}
In particular, $\y^*_c:= \lim\limits_{M \rightarrow \infty} \y_c$ exists and is finite.
\end{theorem}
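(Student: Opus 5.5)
Use Lemma \ref{lemma:lagr_zero} at the child item $c$, solve for the multiplier $\lambda_k$ of the constraint in which $c$ is a child, and then pass to the limit using Corollary \ref{cor:height_lambdas_converge} and Theorem \ref{thm:height_lambdas_converge_zero}.

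\textbf{Step 1: write the stationarity condition at $c$.} Since $\hier$ is tree-based, $c$ has a unique parent $p_k$, so $c$ is a child in exactly one constraint, namely row $k$. Let $I_c$ be the (possibly empty) set of rows in which $c$ is a parent. With $w_c = M^{H(c)}/\yhat_c$, Eq.~\eqref{eq:partial_lagr_zero} reads
\begin{equation*}
\y_c = \yhat_c - \frac{\yhat_c}{M^{H(c)}}\sum_{j\in I_c}\lambda_j + \frac{\yhat_c}{M^{H(c)}}\lambda_k .
\end{equation*}

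\textbf{Step 2: isolate $\lambda_k$.} Divide by $\yhat_c>0$ and use $H(c)=H(p_k)-1$. This gives
\begin{equation*}
\frac{\lambda_k}{M^{H(p_k)-1}} = \frac{\y_c}{\yhat_c}-1 + \sum_{j\in I_c}\frac{\lambda_j}{M^{H(c)}} .
\end{equation*}
For each $j\in I_c$ we have $p_j=c$, so $H(p_j)=H(c)$. Theorem \ref{thm:height_lambdas_converge_zero} therefore gives $\lambda_j/M^{H(c)}\to 0$, and the sum over $I_c$ vanishes as $M\to\infty$.

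\textbf{Step 3: pass to the limit.} By Corollary \ref{cor:height_lambdas_converge}, the left-hand side converges to a finite limit. Combined with Step 2, $\y_c/\yhat_c - 1$ also converges, to the same value, which is Eq.~\eqref{eq:height_lambdas_limit}. Multiplying by $\yhat_c$ shows that $\y^*_c=\lim_{M\to\infty}\y_c$ exists and is finite.

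The only delicate point is the bookkeeping of heights: the normalizations by $M^{H(c)}$ and $M^{H(p_k)-1}$ must be exactly the ones covered by the two earlier results. The identity $H(c)=H(p_k)-1$ handles the term $\lambda_k$, while $H(p_j)=H(c)$ for $j\in I_c$ handles the vanishing terms. Beyond this, the argument is routine rearrangement.
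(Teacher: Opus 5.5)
Your proposal is correct and matches the paper's proof essentially step for step. Like the paper, you apply Lemma \ref{lemma:lagr_zero} at $c$ and rewrite $\y_c/\yhat_c - 1$ as $\lambda_k/M^{H(p_k)-1}$ minus the sum of $\lambda_j/M^{H(p_j)}$ over the rows where $c$ is a parent, then conclude with Corollary \ref{cor:height_lambdas_converge} and Theorem \ref{thm:height_lambdas_converge_zero}; your explicit height bookkeeping ($H(c)=H(p_k)-1$ and $H(p_j)=H(c)$) is exactly what the paper's proof uses, the latter only implicitly.
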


\begin{proof}
We need only establish Eq.~\eqref{eq:height_lambdas_limit}. Assuming this, Corollary \ref{cor:height_lambdas_converge} implies the limit on the left converges, and therefore the limit on the right does as well. 

    Let $J_c \subseteq \{1,\ldots,K\}$ be the (possibly empty) set of constraints where $c$ is the parent. By Lemma \ref{lemma:lagr_zero}, 
    \begin{align*}
        \y_c &= \yhat_c + \frac{\yhat_c}{M^{H(c)}} \lambda_k - \frac{\yhat_c}{M^{H(c)}}\sum_{j \in I_c}\lambda_j \\
        \frac{\y_c}{\yhat_c} - 1  &= \frac{\lambda_k}{M^{H(c)}} - \sum_{j \in J_c} \frac{\lambda_j}{M^{H(c)}} \\
          &= \frac{\lambda_k}{M^{H(p_k) - 1}} - \sum_{j \in J_c} \frac{\lambda_j}{M^{H(p_j)}}.
    \end{align*}
    By Corollary \ref{cor:height_lambdas_converge} $\lim\limits_{M \rightarrow \infty} \frac{\lambda_k}{M^{H(p_k)-1}}$ converges and by Theorem \ref{thm:height_lambdas_converge_zero} $\lim\limits_{M \rightarrow \infty} \frac{\lambda_j}{M^{H(p_j)}} = 0$ for all $j \in I_c$ . This establishes Eq.~\eqref{eq:height_lambdas_limit}.
\end{proof}

\begin{corollary} \label{cor:top_down_convergence}
Let $\hier$ be a tree-based hierarchy with canonical aggregation matrix $\A$ and top-heavy weight matrix $\W$. Then $\y^* = \lim\limits_{M \rightarrow\infty} \y(M)$ exists and corresponds to share-based disaggregation. In particular, for any $1 \leq k \leq K$ and any $s \in C_k$, 
$$
\y^*_s = \frac{\yhat_s}{\sum_{c \in C_k} \yhat_c}\y^*_{p_k}.
$$
This says $\y^*_s$ is the share-based disaggregation of its parent $\y^*_{p_k}$ based on the weights from $\yhat$. 
\end{corollary}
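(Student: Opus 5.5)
Existence of the limit is already in hand: Theorem \ref{thm:top_level_limits} covers top-level parents and Theorem \ref{thm:child_limits} covers every other item, so $\y^* = \lim_{M\to\infty}\y(M)$ exists coordinatewise. What remains is to show that the limit is the share-based disaggregation. For a fixed constraint $k$ I will work with the children $s\in C_k$. Each $s$ has the unique parent $p_k$, so Eq.~\eqref{eq:height_lambdas_limit} holds with the same row index $k$ for every $s\in C_k$. Writing $\mu_k := \lim_{M\to\infty}\lambda_k/M^{H(p_k)-1}$, which exists by Corollary \ref{cor:height_lambdas_converge}, that equation reads $\y^*_s/\yhat_s - 1 = \mu_k$ for every $s\in C_k$. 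Hence
$$\y^*_s = (1+\mu_k)\,\yhat_s \quad \text{for all } s \in C_k,$$
so the limiting children of constraint $k$ are proportional to their initial forecasts, with a common factor $1+\mu_k$.

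Next I pin down the common factor using the constraint. For each $M$, $\y(M)$ satisfies $\A\y(M)=\zeros$; this is the second block row in Lemma \ref{lemma:lagr_soln}, or it follows directly since $\y$ is the projection onto the null space. Passing to the limit in the $k$th row gives $\y^*_{p_k} = \sum_{c\in C_k}\y^*_c = (1+\mu_k)\sum_{c\in C_k}\yhat_c$. Because $\yhat>\zeros$, the sum is positive, so $1+\mu_k = \y^*_{p_k}/\sum_{c\in C_k}\yhat_c$. Substituting this back yields the claimed formula $\y^*_s = \frac{\yhat_s}{\sum_{c\in C_k}\yhat_c}\y^*_{p_k}$.

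Finally, to call this share-based disaggregation I combine it with Theorem \ref{thm:top_level_limits}, which says the top-level items satisfy $\y^*_p=\yhat_p$. I then induct down the hierarchy in order of increasing depth: each level is obtained from its parent by the proportional split above, so the whole limit vector is the top-down disaggregation anchored at $\yhat$ on the top level. This also shows nonnegativity, since the top level is positive and each split multiplies by positive ratios.

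The one point that needs care is that an item $s$ may be a parent in several constraints (non-strict hierarchies). This does not affect the argument, because Eq.~\eqref{eq:height_lambdas_limit} involves only the unique row $k$ in which $s$ is a child. The terms coming from rows where $s$ is a parent were already shown to vanish inside the proof of Theorem \ref{thm:child_limits}, so the proportionality constant is uniform across $C_k$ as required.
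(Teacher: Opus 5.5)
Your proposal is correct and follows essentially the same route as the paper. Existence of the limit comes from Theorems~\ref{thm:top_level_limits} and~\ref{thm:child_limits}, Eq.~\eqref{eq:height_lambdas_limit} gives a common ratio $\mathbf{y}^*_c/\widehat{\mathbf{y}}_c$ across all $c \in C_k$, and the limiting constraint $\mathbf{y}^*_{p_k} = \sum_{c \in C_k} \mathbf{y}^*_c$ pins down that common factor; naming it $1+\mu_k$ and spelling out the top-down induction (and nonnegativity) are only cosmetic additions.
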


\begin{proof}

Every item in $\hier$ is either a top-level parent item or a child.  Theorem \ref{thm:top_level_limits} and Theorem \ref{thm:child_limits} establish the convergence of $\y^*_n$ for each item $1 \leq n \leq N$. 

For each $1 \leq k \leq K$, the constraint $\frac{\partial\lagr}{\partial\lambda_k} = 0$ says $\y_{p_k} = \sum_{c \in C_k} \y_c$ and therefore by convergence, $\y^*_{p_k} = \sum_{c \in C} \y^*_c$.

Fix $s \in C_k$.  By Eq.~\eqref{eq:height_lambdas_limit}, for any $c \in C_k$ we have $\frac{\y^*_s}{\yhat_s} = \frac{\y^*_c}{\yhat_c}$ because $\lim\limits_{M \rightarrow \infty} \frac{\lambda_k}{M^{H(p_k)-1}}$ is a finite constant that is independent of $c$. Therefore $\y^*_c = \frac{\y^*_s}{\yhat_s} \yhat_c$ (note this holds vacuously when $c=s$) and
\begin{align*}
\y^*_{p_k}  = \sum_{c \in C_k} \y^*_c &= \sum_{c \in C_k} \frac{\y^*_s}{\yhat_s} \yhat_c \\
\yhat_s \y^*_{p_k} &= \y^*_s \sum_{c \in C_k} \yhat_c \\
\frac{\yhat_s}{\sum_{c \in C_k} \yhat_c} \y^*_{p_k} &= \y^*_s.
\end{align*}
\end{proof}

\subsection{Bottom-up aggregation and bottom-heavy weighting}

We begin with a depth-based analogue of Theorem \ref{thm:height_lambdas_converge_zero} and Corollary \ref{cor:height_lambdas_converge}.
\begin{theorem}\label{thm:depth_lambdas_converge}
Let $\hier$ be a strict tree-based hierarchy with aggregation matrix $\A$ and bottom-heavy weight matrix $\W$. For all $1 \leq k \leq K$, 
\begin{align*}
\lim\limits_{M \rightarrow \infty} \frac{\lambda_k}{M^{D(p_k) + 1}} = 0.
\end{align*}
Consequently, 
\begin{align*}
\lim\limits_{M \rightarrow \infty} \frac{\lambda_k}{M^{D(p_k)}}
\end{align*}
exists and is finite.
\end{theorem}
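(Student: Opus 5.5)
The plan mirrors the proof of Theorem~\ref{thm:height_lambdas_converge_zero}, with depth in place of height. Fix $M$ and let $\bL$ be the $K\times K$ diagonal matrix with entries $\ell_k = M^{D(p_k)+1}$. Set $\Q = \A\W\inv\A^{\top}\bL$, so that
$$
\bL\inv\blam = \Q\inv\A\yhat ,
$$
whose $k$th entry is exactly $\lambda_k/M^{D(p_k)+1}$. Since $\A\yhat$ does not depend on $M$, it suffices to show $\Q\inv\rightarrow\zeros$ as $M\rightarrow\infty$.

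First I compute the entries of $\Q$ from Eq.~\eqref{eq:Q_structure}, using $1/w_n = \yhat_n M^{-D(n)}$ and $D(c)=D(p)+1$ for a child $c$ of $p$.
\begin{itemize}
\item On the diagonal,
$$
\Q_{i,i} = \yhat_{p_i}M^{-D(p_i)}M^{D(p_i)+1} + \sum_{c\in C_i}\yhat_c M^{-D(p_i)-1}M^{D(p_i)+1} = M\yhat_{p_i} + \sum_{c\in C_i}\yhat_c .
$$
This is linear in $M$ with positive leading coefficient $\yhat_{p_i}$.
\item Off the diagonal, $\Q_{i,j}\neq 0$ requires a unique shared item $n$. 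Strictness rules out Case~3 (an item being parent in two rows). This is where strictness is essential: Case~3 would give $-M\yhat_n$ on both sides of the diagonal and break the argument.
\item In Case~1 ($n$ a child on row $i$ and parent on row $j$, so $i<j$ and $D(n)=D(p_j)$), we get $\Q_{i,j} = -\yhat_n M^{-D(n)}M^{D(p_j)+1} = -M\yhat_n$.
\item In Case~2 ($n$ a parent on row $i$ and child on row $j$, so $i>j$ and $D(n)=D(p_j)+1$), we get $\Q_{i,j} = -\yhat_n$, a constant.
\end{itemize}
So the entries above the diagonal are at most linear in $M$, and those below it are constant. This is the transpose of the pattern in the top-heavy case.

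Next, the determinant argument goes through with the roles of upper and lower swapped. The identity permutation contributes $M^K\prod_i\yhat_{p_i}$. Any non-identity permutation $\sigma$ has some $i$ with $\sigma(i)<i$, so the corresponding factor $\Q_{i,\sigma(i)}$ is constant, and that term has degree at most $K-1$. Hence $\det\Q$ has exact degree $K$. Every cofactor is the determinant of a $(K-1)\times(K-1)$ matrix with entries at most linear in $M$, so it has degree at most $K-1$. Therefore $\Q\inv\rightarrow\zeros$, which gives the first claim.

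For the second claim, I repeat the argument of Corollary~\ref{cor:height_lambdas_converge}. Each $\lambda_k = f(M)/g(M)$ is a rational function of $M$. The limit $\lambda_k/M^{D(p_k)+1}\rightarrow 0$ forces $\deg f < \deg g + D(p_k)+1$, that is, $\deg f\le \deg g + D(p_k)$. Hence $\lambda_k/M^{D(p_k)}$ has a finite limit.

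The main obstacle is the bookkeeping in the off-diagonal case analysis. I need to verify which side of the diagonal each case lands on under the height ordering of rows, and to use strictness to exclude Case~3.
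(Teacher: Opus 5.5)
Your proposal is correct and follows the paper's proof essentially line for line. It uses the same $\bL$ with diagonal entries $M^{D(p_k)+1}$, and the same entry computation showing $\Q$ is at most linear in $M$ on and above the diagonal and constant below it. It then runs the same determinant/cofactor argument using $\sigma(i)<i$, and additionally spells out the rational-function step for the second claim, which the paper leaves implicit by analogy with Corollary~\ref{cor:height_lambdas_converge}.

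One cosmetic remark: in the excluded Case~3 both $a_{i,n}$ and $a_{j,n}$ equal $+1$, so that entry would be $+M\yhat_n$ rather than $-M\yhat_n$. This has no effect, since strictness rules the case out.
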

\begin{proof}
The proof is very similar to that of Theorem \ref{thm:height_lambdas_converge_zero}. 

Fix $M$ and let $\bL$ be the $K \times K$ diagonal matrix with diagonal entries $M^{D(p_k)+1}$. Consider $\bL\inv \blam$, whose entries are the $\frac{\lambda_k}{M^{H(p_k)}}$ that we want to understand. Let $\Q = \A\W\inv\A^{\top}\bL$ so that $\bL\inv \blam = \Q\inv \A \y$ and consider the entries $\Q_{i,j}$. 

On the diagonal, by Eq.~\eqref{eq:Q_structure},
\begin{align*}
    \Q_{i,i} &= \frac{\yhat_{p_i}}{M^{D(p_i)}}M^{D(p_i)+1} + \sum_{c \in C_i} \frac{\yhat_c}{M^{D(p_i)+1}}M^{D(p_i)+1} \\
    &= M\yhat_{p_i} + \sum_{c \in C_i} \yhat_c.
\end{align*}
On the other hand, if $i \neq j$, then $\Q_{i,j} = 0$ unless there exists some item $1 \leq n \leq N$ such that $a_{i,n}$ and $a_{j,n}$ are nonzero. Because $\hier$ is tree-based, if such an $n$ exists, then it is unique and  because $\hier$ is strict, this can only happen if\footnote{This is the place where the proof of Theorem \ref{thm:height_lambdas_converge_zero} does not carry over verbatim. If $n$ were allowed to be a parent on both row $i$ and row $j$, then we would recover $\Q_{i,j} = -M\yhat_n$. In would then be possible that the minors of $\Q$ could be polynomials of degree $K$, which breaks the rest of the argument as we could not guarantee $\Q\inv \rightarrow \zeros$.}
\begin{enumerate}
    \item[Case 1:] item $n$ is a child on row $i$ and a parent on row $j$, meaning $i<j$ and $D(n) = D(p_i)+1 = D(p_j)$ or
    \item[Case 2:] item $n$ is a parent on row $i$ and a child on row $j$, meaning $i>j$ and $D(n) = D(p_i) = D(p_j)+1$.
\end{enumerate} 
Therefore, if $\Q_{i,j} \neq 0$ again by Eq.~\eqref{eq:Q_structure}
\begin{align*}
    \Q_{i,j} &= -\frac{\yhat_n}{M^{D(n)}}\cdot M^{D(p_j)+1} 
    = \begin{cases} -M\yhat_n & \text{ if } i < j \\ -\yhat_n & \text{ if } i > j.\end{cases}
\end{align*}

The remainder of the proof is identical to that of Theorem \ref{thm:height_lambdas_converge_zero}, with the exception that in computing $\det(\Q)$, it is now the case that for any non-identity permutation $\sigma$ there is some $i$ where $\sigma(i) < i$ and hence $\Q_{i,\sigma(i)}$ is a constant that is independent of $M$.
\end{proof}

\begin{corollary} \label{cor:depth_based_convergence}
Let $\hier$ be a strict tree-based hierarchy with aggregation matrix $\A$ and bottom-heavy weight matrix $\W$. Then
\begin{enumerate}
    \item $\y^*:= \lim\limits_{M \rightarrow \infty} \y(M)$ exists and is finite, 
    \item for each $1 \leq k \leq K$, $\y^*_{p_k} = \sum_{c \in C_k} \y^*_{c}$ and 
    \item if $c$ is a bottom-level item with no children, then $\y^*_c = \yhat_c$. 
\end{enumerate}
In other words, $\y^*$ corresponds to bottom-up aggregation. 
\end{corollary}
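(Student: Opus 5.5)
The plan mirrors the top-heavy argument (Theorems \ref{thm:top_level_limits}, \ref{thm:child_limits} and Corollary \ref{cor:top_down_convergence}), now driven by Theorem \ref{thm:depth_lambdas_converge}. Everything rests on Lemma \ref{lemma:lagr_zero} with bottom-heavy weights $w_n = M^{D(n)}/\yhat_n$. For any item $n$, let $k$ be the (unique, if any) constraint where $n$ is a child, and let $j$ be the (unique, by strictness) constraint where $n$ is a parent. Then
\begin{equation*}
\frac{\y_n}{\yhat_n} - 1 = \frac{\lambda_k}{M^{D(n)}} - \frac{\lambda_j}{M^{D(n)}} = \frac{\lambda_k}{M^{D(p_k)+1}} - \frac{\lambda_j}{M^{D(p_j)}},
\end{equation*}
using $D(n) = D(p_k)+1$ and $D(n) = D(p_j)$. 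A term is absent whenever the corresponding constraint does not exist.

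For part (1), I would read off the limits directly from Theorem \ref{thm:depth_lambdas_converge}. The first term tends to $0$. The second term, $\lambda_j/M^{D(p_j)}$, has a finite limit. Hence $\y_n(M)$ converges for every $n$, which proves (1). For part (3), a bottom-level item $c$ has no constraint where it is a parent, so only the first term appears. That term tends to $0$, giving $\y^*_c = \yhat_c$.

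For part (2), I would use stationarity in $\lambda_k$, which says $\y_{p_k}(M) = \sum_{c\in C_k}\y_c(M)$ for every $M$. This identity passes to the limit by part (1). Combined with (3), an induction on height (from bottom-level items upward) shows that every $\y^*_n$ equals the sum of the $\yhat$ values of the bottom-level descendants of $n$. This is exactly bottom-up aggregation.

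The main obstacle is already absorbed into Theorem \ref{thm:depth_lambdas_converge}, namely the degree argument for $\Q^{-1}$, which needs strictness. The only delicate step left is the bookkeeping of exponents in the displayed identity. I would check that the exponent on the child term is $D(p_k)+1$, so it vanishes, while the exponent on the parent term is $D(p_j)$, so it is merely bounded. Strictness guarantees a single parent-constraint term, although finitely many such terms would also be fine.
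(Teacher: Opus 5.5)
Your proposal is correct and is essentially the paper's proof. Both read $\y_n/\yhat_n-1$ off Lemma~\ref{lemma:lagr_zero} with $w_n=M^{D(n)}/\yhat_n$, observe that the child-constraint multiplier enters as $\lambda_k/M^{D(p_k)+1}\to 0$ while the parent-constraint multiplier enters as $\lambda_j/M^{D(p_j)}$ with a finite limit by Theorem~\ref{thm:depth_lambdas_converge}, and then pass $\y_{p_k}=\sum_{c\in C_k}\y_c$ to the limit; handling all items with one identity and spelling out the induction for the bottom-up interpretation are only presentational differences, and your signs are cleaner than the paper's displayed leaf formula.
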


\begin{proof}
First we will prove $\y^*_{p_k}$ exists. Let $J_k \subseteq \{1,\ldots,K\}$ be the set of constraints on which $p_k$ is a child ($J_k$ is either empty or a singleton). By Lemma \ref{lemma:lagr_zero},
\begin{align*}
    \y_{p_k} &= \yhat_{p_k} -  \frac{\yhat_{p_k}}{M^{D(p_k)}}\lambda_k + \frac{\yhat_{p_k}}{M^{D(p_k)}}\sum_{j \in J_k}\lambda_j \\
    &= \yhat_{p_k} \left(1 - \frac{\lambda_k}{M^{D(p_k})} + \sum_{j \in J_k} \frac{\lambda_j}{M^{D(p_j)+1}}\right).
\end{align*}
By Theorem \ref{thm:depth_lambdas_converge}, the limit of the quantity on the right side converges as $M \rightarrow \infty$. 

On the other hand, if $c$ has no children and parent $p_j$, then by Lemma \ref{lemma:lagr_zero}, 
\begin{align*}
\y_{c} &= \yhat_{c}-  \frac{\yhat_c}{M^{D(c)}}\lambda_j = \yhat_{c} \left(1 + \frac{\lambda_k}{M^{D(p_j)+1}}\right).
\end{align*}
By Theorem \ref{thm:depth_lambdas_converge} $\frac{\lambda_k}{M^{D(p_j)+1}} \rightarrow 0$ as $M \rightarrow \infty$ and hence $\y_c \rightarrow \yhat_c$. 

Thus we have proved the existence of $\y^*$ and that $\y^*_c = \yhat_c$ for bottom-level children. All other items are parents and the constraint $\y_{p_k} = \sum_{c \in C_k} \y_c$ passes to the limit, implying $\y^*_{p_k} = \sum_{c \in C_k} \y^*_{c}$ for all $k$. This completes the proof. 
\end{proof}

\section{Operations on the aggregation matrix}
\begin{theorem}\label{thm:representation}
Let $\mathbf{A} \in \mathbb{R}^{K \times N}$ be an aggregation matrix, let $\mathbf{W} \in \mathbb{R}^{N \times N}$ be a weight matrix, and let $\mathbf{E} \in \mathbb{R}^{K \times K}$ be invertible. If the least squares solution $$\widehat{\mathbf{y}} - \mathbf{W}^{-1}\mathbf{A}^{\top}(\mathbf{A}\mathbf{W}^{-1}\mathbf{A}^{\top})^{-1}\mathbf{A}\widehat{\mathbf{y}},$$ is nonnegative then so is the least squares solution $$\widehat{\mathbf{y}} - \mathbf{W}^{-1}\mathbf{B}^{\top}(\mathbf{B}\mathbf{W}^{-1}\mathbf{B}^{\top})^{-1}\mathbf{B}\widehat{\mathbf{y}},$$ where $\mathbf{B} = \mathbf{E}\mathbf{A}$.
\end{theorem}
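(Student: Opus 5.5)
The plan is to show that the two least squares solutions are literally the same vector, so nonnegativity transfers trivially. I will substitute $\mathbf{B} = \mathbf{E}\mathbf{A}$ into the second expression and simplify the projection-type operator
$$
\mathbf{W}^{-1}\mathbf{B}^{\top}(\mathbf{B}\mathbf{W}^{-1}\mathbf{B}^{\top})^{-1}\mathbf{B}.
$$
First, $\mathbf{B}^{\top} = \mathbf{A}^{\top}\mathbf{E}^{\top}$, so $\mathbf{B}\mathbf{W}^{-1}\mathbf{B}^{\top} = \mathbf{E}(\mathbf{A}\mathbf{W}^{-1}\mathbf{A}^{\top})\mathbf{E}^{\top}$. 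Since $\mathbf{E}$ and $\mathbf{E}^{\top}$ are invertible, this matrix is invertible exactly when $\mathbf{A}\mathbf{W}^{-1}\mathbf{A}^{\top}$ is. That holds because the first least squares solution is assumed to exist. Its inverse is $(\mathbf{E}^{\top})^{-1}(\mathbf{A}\mathbf{W}^{-1}\mathbf{A}^{\top})^{-1}\mathbf{E}^{-1}$.

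Next I will substitute this back. The operator becomes
$$
\mathbf{W}^{-1}\mathbf{A}^{\top}\mathbf{E}^{\top}(\mathbf{E}^{\top})^{-1}(\mathbf{A}\mathbf{W}^{-1}\mathbf{A}^{\top})^{-1}\mathbf{E}^{-1}\mathbf{E}\mathbf{A},
$$
and the factors of $\mathbf{E}$ cancel to give $\mathbf{W}^{-1}\mathbf{A}^{\top}(\mathbf{A}\mathbf{W}^{-1}\mathbf{A}^{\top})^{-1}\mathbf{A}$. Applied to $\widehat{\mathbf{y}}$, this shows the two least squares solutions coincide, so nonnegativity of one gives nonnegativity of the other. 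As a sanity check, this matches the geometric picture: $\mathbf{B}$ and $\mathbf{A}$ have the same null space, so the $\mathbf{W}$-weighted projection of $\widehat{\mathbf{y}}$ onto it is the same.

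The only step needing care is the invertibility bookkeeping, namely that $(\mathbf{E}\mathbf{G}\mathbf{E}^{\top})^{-1} = (\mathbf{E}^{\top})^{-1}\mathbf{G}^{-1}\mathbf{E}^{-1}$ with $\mathbf{G} = \mathbf{A}\mathbf{W}^{-1}\mathbf{A}^{\top}$. Beyond that I expect no real obstacle. I will also remark that this justifies the row operations and row reorderings used to put a tree-based aggregation matrix into canonical form. The argument shows the solution itself is invariant under such operations, which is stronger than invariance of nonnegativity.
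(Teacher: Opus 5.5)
Your proposal is correct and is essentially the paper's proof. The paper also substitutes $\mathbf{B}=\mathbf{E}\mathbf{A}$, inverts $\mathbf{E}(\mathbf{A}\mathbf{W}^{-1}\mathbf{A}^{\top})\mathbf{E}^{\top}$ factor by factor, and cancels the $\mathbf{E}$'s to show the two least squares solutions coincide; your explicit check that $\mathbf{B}\mathbf{W}^{-1}\mathbf{B}^{\top}$ is invertible fills in a detail the paper leaves implicit.
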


\begin{proof}
We can compute 
\begin{align*}
& \mathbf{W}^{-1}\mathbf{B}^{\top}(\mathbf{B}\mathbf{W}^{-1}\mathbf{B}^{\top})^{-1}\mathbf{B} \\
=& \mathbf{W}^{-1}\mathbf{A}^{\top}\mathbf{E}^{\top}(\mathbf{E}\mathbf{A}\mathbf{W}^{-1}\mathbf{A}^{\top}\mathbf{E}^{\top})^{-1}\mathbf{E}\mathbf{A} \\
=& \mathbf{W}^{-1}\mathbf{A}^{\top}\mathbf{E}^{\top}(\mathbf{E}^{\top})^{-1}(\mathbf{A}\mathbf{W}^{-1}\mathbf{A}^{\top})^{-1}\mathbf{E}^{-1}\mathbf{E}\mathbf{A} \\
=& \mathbf{W}^{-1}\mathbf{A}^{\top}(\mathbf{A}\mathbf{W}^{-1}\mathbf{A}^{\top})^{-1}\mathbf{A}.
\end{align*}
\end{proof}

\begin{corollary}
    If $\mathbf{A}$ is row equivalent to a matrix with disjoint row supports and $W_{nn} = 1/\widehat{y}_n$, then  the the least squares solution Eq.~\eqref{eqn:basic} is nonnegative.
\end{corollary}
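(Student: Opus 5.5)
Write $\mathbf{A} = \mathbf{E}^{-1}\mathbf{B}$, where $\mathbf{B}$ has disjoint row supports and $\mathbf{E}$ is invertible. Row equivalence means exactly that $\mathbf{B} = \mathbf{E}\mathbf{A}$ for some invertible $\mathbf{E} \in \mathbb{R}^{K\times K}$, namely a product of elementary row matrices. We may assume $\mathbf{A}$ and $\mathbf{B}$ have full row rank, so that $\mathbf{A}\mathbf{W}^{-1}\mathbf{A}^{\top}$ is invertible and the least squares expression Eq.~\eqref{eq:lsqr} is well defined; redundant zero rows of $\mathbf{B}$ can be dropped without changing the constraint set. The plan is to apply Corollary \ref{cor:disjoint_row_supports} to $\mathbf{B}$, and then transfer the conclusion to $\mathbf{A}$ using Theorem \ref{thm:representation} with $\mathbf{E}^{-1}$ in the role of the invertible matrix.

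\textbf{Step 1.} Check that $\mathbf{B}$ satisfies the hypotheses of Corollary \ref{cor:disjoint_row_supports}: it has disjoint row supports, $\mathbf{W}$ is diagonal with $W_{nn} = 1/\widehat{y}_n$, and $\widehat{\mathbf{y}} > \mathbf{0}$. The corollary then gives
\[
\widehat{\mathbf{y}} - \mathbf{W}^{-1}\mathbf{B}^{\top}(\mathbf{B}\mathbf{W}^{-1}\mathbf{B}^{\top})^{-1}\mathbf{B}\widehat{\mathbf{y}} \geq \mathbf{0}.
\]

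\textbf{Step 2.} Apply Theorem \ref{thm:representation} with $\mathbf{B}$ as the starting matrix and $\mathbf{E}^{-1}$ as the invertible matrix. This gives nonnegativity of the least squares solution for $\mathbf{E}^{-1}\mathbf{B} = \mathbf{A}$. Equivalently, the computation in the proof of Theorem \ref{thm:representation} shows that the projector $\mathbf{W}^{-1}\mathbf{A}^{\top}(\mathbf{A}\mathbf{W}^{-1}\mathbf{A}^{\top})^{-1}\mathbf{A}$ is unchanged under left multiplication of $\mathbf{A}$ by an invertible matrix. Hence the two least squares solutions are literally identical, not merely both nonnegative.

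\textbf{Step 3.} Conclude as in Theorem \ref{thm:nonneg}. Since the unconstrained minimizer is feasible for the constraint $\mathbf{y}\geq \mathbf{0}$, it also solves Eq.~\eqref{eqn:basic}.

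\textbf{Main obstacle.} The only delicate point is well-definedness. Row equivalence can introduce zero rows, and a matrix with disjoint row supports may have empty rows, either of which makes $\mathbf{B}\mathbf{W}^{-1}\mathbf{B}^{\top}$ singular. I would handle this by discarding zero rows from both $\mathbf{A}$ and $\mathbf{B}$: doing so leaves the null space, and hence the projection, unchanged. After this reduction, a disjoint-support matrix with nonzero rows has a positive diagonal Gram matrix $\mathbf{B}\mathbf{W}^{-1}\mathbf{B}^{\top}$, so every inverse that appears exists.
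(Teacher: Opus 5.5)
Your proposal is correct and matches the argument the paper intends; the paper gives no separate proof and states the corollary directly after Theorem~\ref{thm:representation}. As you do, one applies Corollary~\ref{cor:disjoint_row_supports} to the disjoint-support matrix $\mathbf{B}$ and transfers the conclusion to $\mathbf{A}=\mathbf{E}^{-1}\mathbf{B}$ via Theorem~\ref{thm:representation}, whose proof indeed shows the two least-squares solutions coincide. Two small remarks: your Step~1 hypothesis check should also note that $\mathbf{B}$ has entries in $\{0,\pm1\}$, which the paper builds into its notion of an aggregation matrix and which the proof of Corollary~\ref{cor:disjoint_row_supports} uses, and your zero-row concern disappears once $\mathbf{A}$ has full row rank (as Eq.~\eqref{eq:lsqr} requires), because then $\mathbf{B}=\mathbf{E}\mathbf{A}$ has full row rank too.
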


\section{A General Counterexample to Theorem \ref{thm:nonneg}} \label{appendix:counterexample}

\begin{example}\label{ex:nonnegative_counterexample}
Let $$\mathbf{A} = \begin{bmatrix} 1 & 0 & -1 & 0 & -1 \\ 0 & 1 & 0 & -1 & -1 \end{bmatrix}$$
and $$\widehat{\mathbf{y}} = \begin{bmatrix} 1, 1, 5, 5, 1\end{bmatrix}^{\top}.$$

Then setting weights $w_n = 1/\widehat{y}_n$ gives the following weighted least squares solution to Eq.~\eqref{eq:weighted_lsqr_soln}:
$$
\begin{bmatrix} 1.625 & 1.625 & 1.875 & 1.875 & -0.25\end{bmatrix}^{\top}.
$$
Therefore Theorem \ref{thm:weighted_lsqr_entries} cannot be generalized for arbitrary aggregation matrices and arbitrary initial solutions $\widehat{\mathbf{y}}$.

\end{example}

\section{Additional plots for Section \ref{sec:exp}}\label{sec:app5}
Figure \ref{fig:comparison_small2} presents the reconciliation result for another specific group (in addition to Figure \ref{fig:comparison_small1}). Figure \ref{fig:comparison2_2} presents the result from large-scale optimization for another category (in addition to Figure \ref{fig:comparison2_1}). Figure \ref{fig:comparison_plot_2} presents the reconciliation results between daily product group forecasts and weekly product ID (SKU) forecasts for another group (in addition to Figure \ref{fig:comparison_plot_1}).

\begin{figure}[h]
\centering
\includegraphics[width=\linewidth]{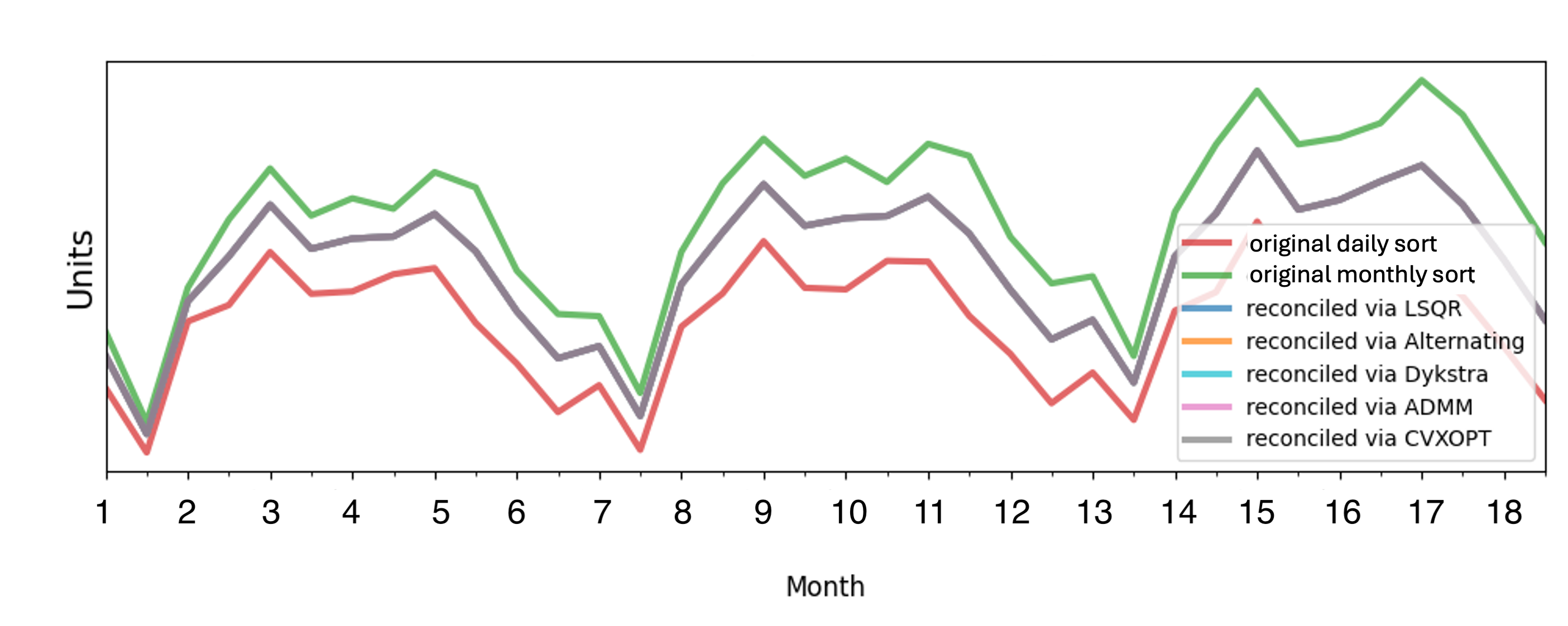}
\caption{Visualization of result for a problem size $106,408$ for another group, in addition to Figure \ref{fig:comparison_small1}.}
\label{fig:comparison_small2}
\end{figure}

\begin{figure}[b]
\centering
\includegraphics[width=\linewidth]{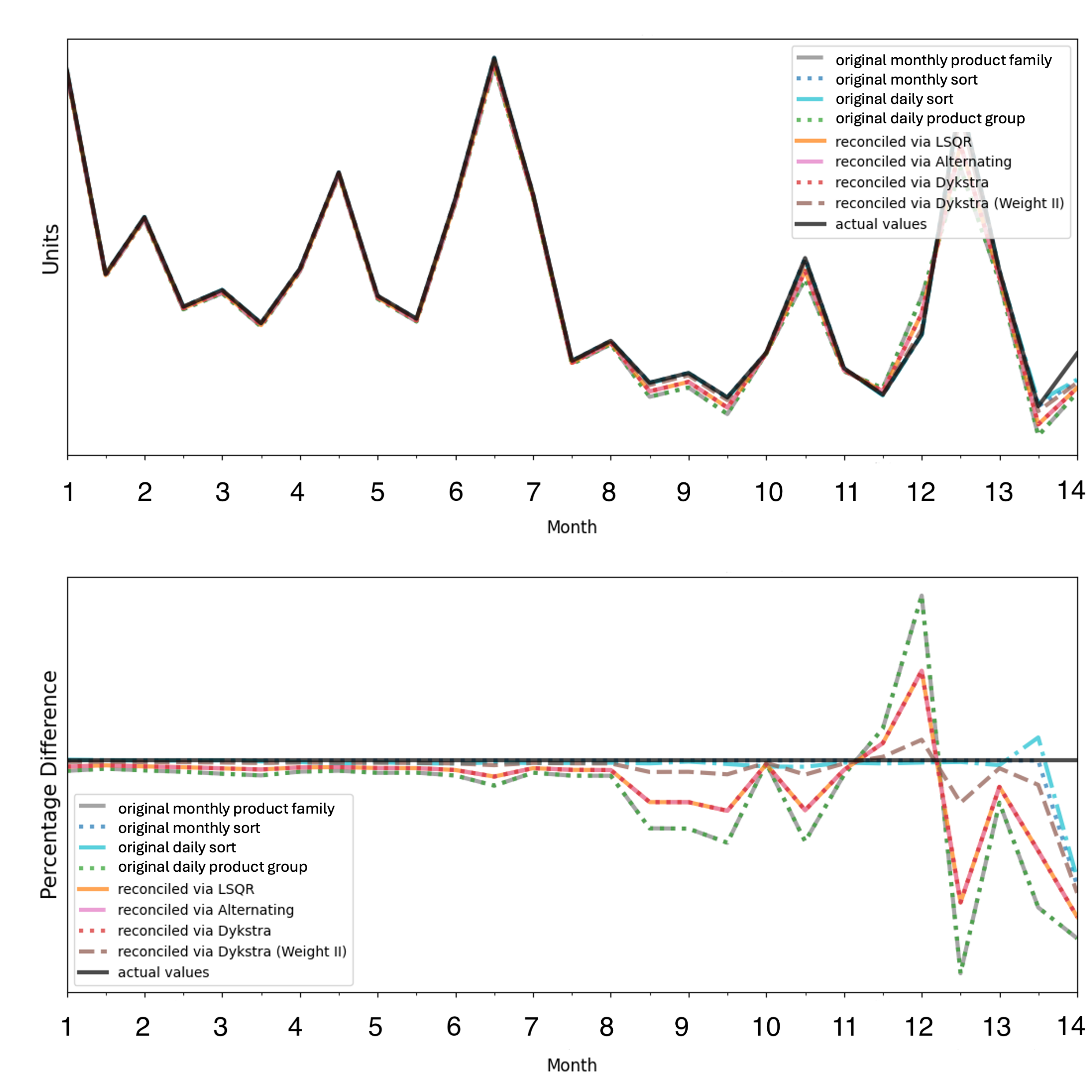}
\caption{Visualization of result for a problem size ($>$4 billion) for a segment of data. The top plot is Units vs Month, while the bottom subplot is the percentage difference to actual values vs Month.}
\label{fig:comparison2_2}
\end{figure}

\begin{figure}[b]
\centering
\includegraphics[width=\linewidth]{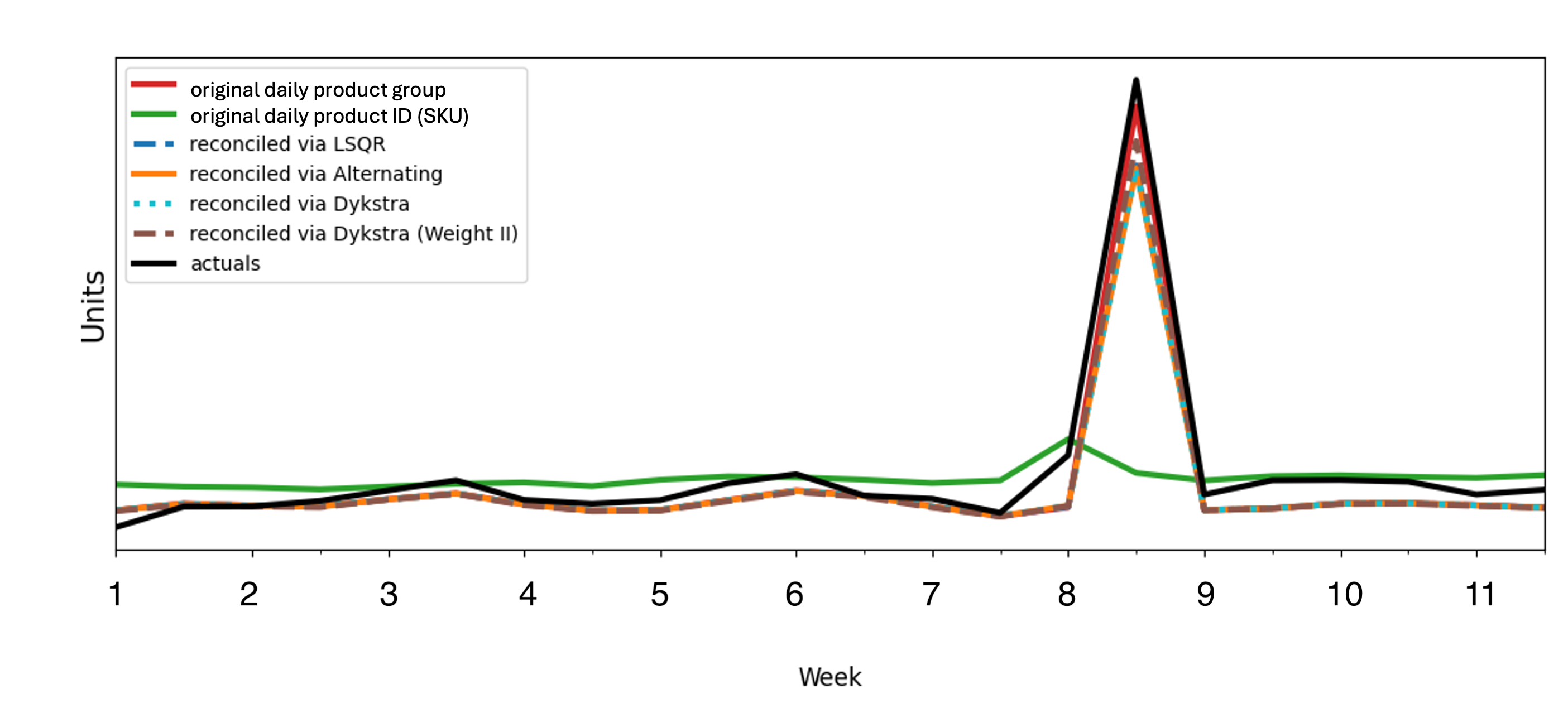}
\caption{Visualization of result for a problem size ($>$4 billion) for a segment of data.}
\label{fig:comparison_plot_2}
\end{figure}

\end{document}